\documentclass[sigconf]{acmart}
\usepackage{flushend}   
\usepackage{appendix} 
\usepackage{amsthm} 
\newtheorem{remark}{Remark}
\usepackage{bm} 
\usepackage{float}  
\usepackage{enumitem} 
\usepackage{multirow}

\newtheorem{Corollary}{Corollary}
\newtheorem{proposition}{proposition}

\AtBeginDocument{%
  \providecommand\BibTeX{{%
    \normalfont B\kern0.5em{\scshape i\kern0.25em b}\kern0.8em\TeX}}}

\setcopyright{acmlicensed}
\copyrightyear{2024}
\acmYear{2024}
\acmDOI{XXXXXXX.XXXXXXX}

\acmConference[Conference acronym 'XX]{Make sure to enter the correct
  conference title from your rights confirmation email}{June 0305,
  2018}{Woodstock, NY}
\acmISBN{97814503XXXXX/18/06}

\graphicspath{{./images/}}

\begin{document}

\title{PSG: Pair-Space Generation for Efficient Generative Reranking}

\author{Chao Feng}

\affiliation{%
  \institution{Kuaishou Tech}
  \city{Beijing}
  \country{China}
}
\email{fengchao08@kuaishou.com}

\author{Li Ma}
\affiliation{%
  \institution{Kuaishou Tech}
  \city{Beijing}
  \country{China}}
\email{maili06@kuaishou.com}

\author{Xiancheng Gao}
\affiliation{%
 \institution{Kuaishou Tech}
 \city{Beijing}
 \country{China}}
\email{gaoxiancheng@kuaishou.com}

\author{Chenghao Zhang}
\affiliation{%
  \institution{Kuaishou Tech}
  \city{Beijing}
  \country{China}
}
\email{zhangchenghao03@kuaishou.com}

\author{Yuanhao Pu}
\affiliation{%
  \institution{Kuaishou Tech}
  \city{Beijing}
  \country{China}}
\email{puyuanhao@kuaishou.com}
\author{Xiang Li}
\affiliation{%
  \institution{Kuaishou Tech}
  \city{Beijing}
  \country{China}}
\email{lixiang44@kuaishou.com}





\renewcommand{\shortauthors}{Chao F, et al.}

\begin{abstract}
Modern recommender systems adopt Generator-Evaluator (G-E) for list-wise reranking: a generator produces sequences from candidates and an evaluator scores them at sequence-level to filter out the optimal one for exposure. Auto-Regressive(AR), working as the backbone for generative recommendation, suffers two limitations. First, its complexity grows linearly with list length, forcing the system to generate fewer lists under rigorous latency constraints and thus limiting exploration. Second, teacher-forcing creates a train–test mismatch; cumulative errors worsen with length and degrade quality.

To address these problems, we propose Pair-Space Generation (PSG), a reformulation that elevates the generation atom from individual items to ordered item pairs. Given $n$ candidate items, PSG operates over pair vocabulary of size $n(n-1)$ per request, generates only $L/2$ tokens. Pair token representations are produced on-the-fly by a pretrained pair-token representation module optimized over large scale exposure logs, eliminating the data sparsity that would otherwise plague a quadratic sized vocabulary.
We establish three theoretical guarantees: (i) PSG is bijective with item-space generation and induces an equivalent family of sequence distributions, thus incurring no loss of expressiveness; (ii) generation in pair-token space achieves approximately a $2\times$ to $4\times$ speedup theoretically under moderate settings and $1.83\times$ in the real industrial environmental settings; and (iii) under outcome-only rewards, the worst-case suboptimality of PSG is bounded by $O((L/2)^2 \bar{\epsilon})$, representing a nearly $4\times$ improvement over item-space generation. Beyond benchmark-based validation, PSG has also been deployed on Kuaishou, delivering a 0.178\% lift in per-user stay time on the platform, which serves over 400 million daily active users.
\end{abstract}

\begin{CCSXML}
<ccs2012>
   <concept>
       <concept_id>10002951.10003317.10003347.10003350</concept_id>
       <concept_desc>Information systems~Recommender systems</concept_desc>
       <concept_significance>500</concept_significance>
       </concept>
   <concept>
       <concept_id>10002951.10003317.10003338.10003343</concept_id>
       <concept_desc>Information systems~Learning to rank</concept_desc>
       <concept_significance>500</concept_significance>
       </concept>
   <concept>
       <concept_id>10002950.10003714.10003716</concept_id>
       <concept_desc>Mathematics of computing~Mathematical optimization</concept_desc>
       <concept_significance>300</concept_significance>
       </concept>
 </ccs2012>
\end{CCSXML}

\ccsdesc[500]{Information systems~Recommender systems}
\ccsdesc[500]{Information systems~Learning to rank}

\keywords{Recommendation, Generative re-ranking, Generator-Evaluator, Pair-Token, Efficiency, Accumulative Error reduction}

\received{20 February 2024}  
\received[revised]{12 March 2024} 
\received[accepted]{5 June 2024}  

\maketitle

\section{Introduction}

Reranking is the final stage of industrial recommender pipelines. Given a candidate set of $n$ items returned by upstream retrieval and ranking, reranking produces an ordered list of length $L$ that maximizes a composite list-level utility (e.g., engagement, dwell time, monetization). In contrast to pointwise scoring, the defining characteristic of reranking is the dominance of item–item interactions: complementary items lift each other, redundant items diminish diversity, and position cascades amplify exposure asymmetries.

The solution space of reranking consists of all ordered permutations of $L$ items selected from $n$ candidates, with size $P(n,L)=\frac{n!}{(n-L)!}$, which grows exponentially with both $L$ and $n$. Unlike upstream ranking, reranking utility must account for combinatorial effects (e.g., diversity) among items. To capture such inter-item influences within a list, early works address this via context-aware pointwise scoring (e.g., DLCM\cite{DBLP:conf/sigir/AiBGC18} and PRM\cite{DBLP:conf/recsys/PeiZZSLSWJGOP19}, Set-Rank\cite{DBLP:conf/sigir/PangXALCW20}, PEAR\cite{DBLP:conf/www/LiZLSCZTXH22}, MIR\cite{DBLP:conf/sigir/XiLZZDTZZY22}), yet such methods only partially explore the combinatorial solution space. In contrast to pointwise scoring with contextual refinement, the G-E\cite{DBLP:conf/sigir/YangQ0YWLHLG25} framework decouples generation from evaluation, thereby naturally expanding the search space. The generator produces candidate sequences from the exponential solution space via sequential decoding or sampling strategies; the evaluator scores each sequence on diversity, engagement, and revenue, and selects the top-1 for final exposure. Evaluator feedback, in turn, steers the generator toward regions of higher utility.


In this paper, we primarily focus on the generator's efficiency and quality within the G-E framework. To satisfy stringent online latency constraints, a line of Non-Autoregressive (NAR) works, including NAR4Rec\cite{10.1145/3637528.3671645}, NLGR\cite{DBLP:conf/www/WangWKWCTZXW25}, JDRec\cite{DBLP:conf/atal/ZhaoLFGZHCPD24} and OMGRec\cite{DBLP:conf/www/XuXCLSLWZZ26}, 
directly predict positional probabilities for each item in parallel. However, this sacrifices accuracy by overlooking the sequential behavioral dependencies of users browsing the sequence. In contrast, Auto-Regressive (AR) models (e.g., Transformer) deliver superior quality but incur heavy model footprints and step-by-step generation overhead\cite{DBLP:conf/kdd/ZhuLCMSZZX0025}. 
Furthermore, the evaluator-guided AR generator faces a fundamental tension: the evaluator emits a single scalar reward at the end of a length $L$ trajectory, while the generator must commit to $L$ sequential decisions. This induces two coupled difficulties: (i).\textbf{Decoding latency:} Industrial Generator operates under stringent end-to-end latency budgets (typically $time\_cost \le 30$ms total). AR decoding of $L \approx 3\sim 8$ items, even with KV cache, consumes a nontrivial share. (ii).\textbf{Error accumulation:} The teacher-forcing training paradigm feeds ground-truth sequences as inputs, yet inference relies on previously generated subsequences, inevitably accumulating prediction errors. Moreover, such errors amplify as the decoding path extends. 

A natural question arises: can the generation horizon be reduced without sacrificing the expressiveness of the generator? We answer in the affirmative. Our key observation is that the semantic unit of rerank utility is not the individual item but the interactions between items. Co-occurrence effects, complementarity, and list-level diversity are inherently pairwise phenomena; item-level action granularity is an artifact of the autoregressive abstraction, not a property of the reward structure. We thus propose to elevate the generation atom from items to ordered item pairs. In concrete terms, the proposed \textbf{Pair-Space Generation (PSG)} treats each ordered pair $(v_i, v_j) \in [n] \times [n]$ ($i \neq j$) as a generation token. The generator produces $L/2$ pair tokens autoregressively; the resulting sequence is then unfolded deterministically into the original length-$L$ item sequence. Three design components make this practical at industrial scale. First, a pretrained pair encoder computes pair token representations on-the-fly from the per-request candidate set, sidestepping the data sparsity that would otherwise plague a static $n^2$-sized embedding table. Second, a dynamic-vocabulary decoder scores pair tokens via inner product with the encoder output, allowing the same generator to serve arbitrary candidate sets without retraining. Third, a pair-space reinforcement learning algorithm operates natively over $L/2$-step trajectories, with action-mask constraints enforcing item non-repetition and group sampling tuned for the expanded action space. 
Formally, we summarize our contributions:
 \begin{itemize}
    \item We propose Pair-Space Generation, a reformulation of generative rerank that elevates the action granularity from items to ordered pairs while preserving full expressiveness.

     \item  We prove that PSG enjoys a provable nearly $4\times$ reduction in the worst-case sub-optimality bound under outcome-only rewards, owing to the quadratic dependence of error compounding on the decoding horizon, and achieves nearly $2\times$ to $4\times$ inference speedup theoretically by shifting computation from the sequential KV-cache bottleneck to the parallel output-projection path.


     \item  We demonstrate the superiority of PSG against state-of-the-art baselines on public benchmarks, achieving significant and consistent improvements. Beyond offline evaluation, we empirically validate PSG on a production short-video platform (Kuaishou) reranking workload serving over 400 million daily active users, where it delivers a 0.178\% lift in stay-time utility and a $1.83\times$ decoding speedup over a strong item-space G-E baseline.
 \end{itemize}

\section{Related Work}
Our discussion of related work falls into three categories. First, we cover generative recommendation. Second, we examine acceleration techniques for AR paradigms. Third, we review existing efforts on mitigating error accumulation in AR generation.
\subsection{Generative  recommendation}
Retrieval has recently shifted from discriminative to generative paradigms. RecForest\cite{DBLP:conf/nips/FengLLLC22} and DSI\cite{DBLP:conf/nips/Tay00NBM000GSCM22,chen2023understanding} pioneer seq-to-seq retrieval by encoding items into discrete code lists via indexing structures, which are then reconstructed by generative models. TIGER\cite{DBLP:conf/nips/RajputMSKVHHT0S23} adopts Semantic IDs (SIDs),
discrete representations derived from quantization, to denote items, employing generative reconstruction to recover SID sequences. The HSTU\cite{DBLP:conf/icml/ZhaiLLWLCGGGHLS24} 
primarily addresses efficiency and scaling-law properties in generative recommendation, while OneRec\cite{DBLP:journals/corr/abs-2512-24762}
incorporates reward-guided reinforcement learning (e.g., GRPO\cite{DBLP:journals/corr/abs-2402-03300}) to align generated results with user preferences. For sequence-level reranking, generator-only approaches such as Seq2Slate\cite{DBLP:journals/corr/abs-1810-02019}, miRNN\cite{DBLP:conf/ijcai/ZhuangOW18}, List-CVAE\cite{DBLP:conf/iclr/JiangGQMR19}
and SortGen\cite{DBLP:conf/sigir/MengGCLZ25}
directly approximate the optimal sequence by generative models. To improve the robustness, FSC\cite{DBLP:journals/corr/abs-2005-12206} and GRN\cite{DBLP:journals/corr/abs-2104-00860} introduce critics to evaluate generated lists, which marks the emergence of the Generator-Evaluator (G-E) paradigm. Other works, like PIER\cite{DBLP:conf/kdd/ShiYWWGLWWW23}, PRS\cite{DBLP:journals/corr/abs-2102-12057}, GFN4Rec\cite{DBLP:conf/kdd/00060HSMZ0G23}, CGA\cite{DBLP:conf/kdd/ZhuLCMSZZX0025} and YOLOR\cite{10.1145/3746252.3761539}, follow this G-E paradigm, driving continued improvements in both efficiency and accuracy. Specifically, among existing methods, GoalRank\cite{zhang2026goalrank} is most similar to the proposed PSG: it also employs an evaluator to guide generation via reinforcement learning, yet operates in item space rather than pair-token space under the same Generator-Evaluator paradigm.

\subsection{Acceleration of auto-regressive generation}
Typical AR generation decodes step-by-step with complexity linear to sequence length. To meet stringent latency, extensive efforts have focused on accelerating decoding. \textbf{Decoder Architecture:} Speculative decoding\cite{DBLP:conf/emnlp/Xia0WCWS23,DBLP:conf/icml/LeviathanKM23,DBLP:journals/corr/abs-2302-01318} verifies multi-token predictions against AR outputs, reducing forward passes and adopted in LLMs.
Similar techniques appear in generative recommendation (RPG\cite{DBLP:conf/kdd/Hou0SJSSHYM25}, HiCoGen\cite{DBLP:conf/cikm/Li0CD0HLZ25}, GReF\cite{DBLP:conf/cikm/LinLDBLYZ025}, OneRanker\cite{DBLP:journals/corr/abs-2603-02999}). N-gram\cite{DBLP:journals/bstj/Shannon48} assumes dependence on the preceding $n-1$ states, shortening context and cutting attention costs\cite{DBLP:conf/naacl/OuCT24,DBLP:conf/emnlp/ChenLLQGAZW25}.
\textbf{Efficient Attention:} PagedAttention\cite{DBLP:conf/sosp/KwonLZ0ZY0ZS23} eliminates KV cache fragmentation; GQA\cite{DBLP:conf/emnlp/AinslieLJZLS23} reduce KV size via shared heads; FlashAttention\cite{DBLP:conf/iclr/Dao24,DBLP:conf/nips/DaoFERR22} accelerates via tiling. Constrained decoding cuts softmax overhead for closed-set tasks\cite{DBLP:conf/cikm/LinLDBLYZ025}. In generative recommendation, HSTU\cite{DBLP:conf/icml/ZhaiLLWLCGGGHLS24} adopts M-FALCON KV cache for faster inference. Additionally, architecture and attention optimizations are compatible. LazyDecoder (OneRec-V2)\cite{DBLP:journals/corr/abs-2508-20900} and NEZHA\cite{DBLP:conf/www/WangZLLLWZLSWXZ26} 
replace the deep encoder with a lightweight processor using lazy cross-attention and KV sharing. Different from these module-level modifications, our PSG offers a complementary direction: it shortens the decoding horizon by packing multiple items into one token, and can be seamlessly integrated with all existing acceleration techniques.
\subsection{Accumulative error reduction in AR}
Exposure bias, the distribution mismatch between teacher-forcing training and free-running inference, is a primary cause of error accumulation in autoregressive generation. Existing solutions fall into three categories. Training-side approaches bridge the gap via curriculum-based sampling\cite{2969239.2969370}, adversarial distribution matching\cite{DBLP:conf/nips/GoyalLZZCB16}, sequence-level objectives\cite{shen-etal-2016-minimum}, or reinforcement learning\cite{3298483.3298649}. Decoding side strategies, such as beam search, retain multiple candidates to curb error propagation without retraining. Architectural innovations, including non-autoregressive models\cite{schmidt-etal-2022-non}, eliminate causal dependencies to prevent error accumulation structurally. Recent advances extend this direction further: self-forcing training achieves full train-inference alignment\cite{DBLP:conf/nips/HuangLHZS25}; process reward models\cite{DBLP:conf/iclr/LightmanKBEBLLS24} enable step-level error detection and test-time pruning for hierarchical AR generation\cite{guo2026promise}; iterative refinement and masked generation paradigms\cite{DBLP:conf/cvpr/ChangZJLF22} offer alternatives to left-to-right decoding. Our PSG is orthogonal to these efforts: it operates at the representation level by mapping items into pair tokens, shortening the generation horizon without sacrificing expressiveness.

\begin{figure*}[htbp]
\centering
\includegraphics[width=\linewidth]{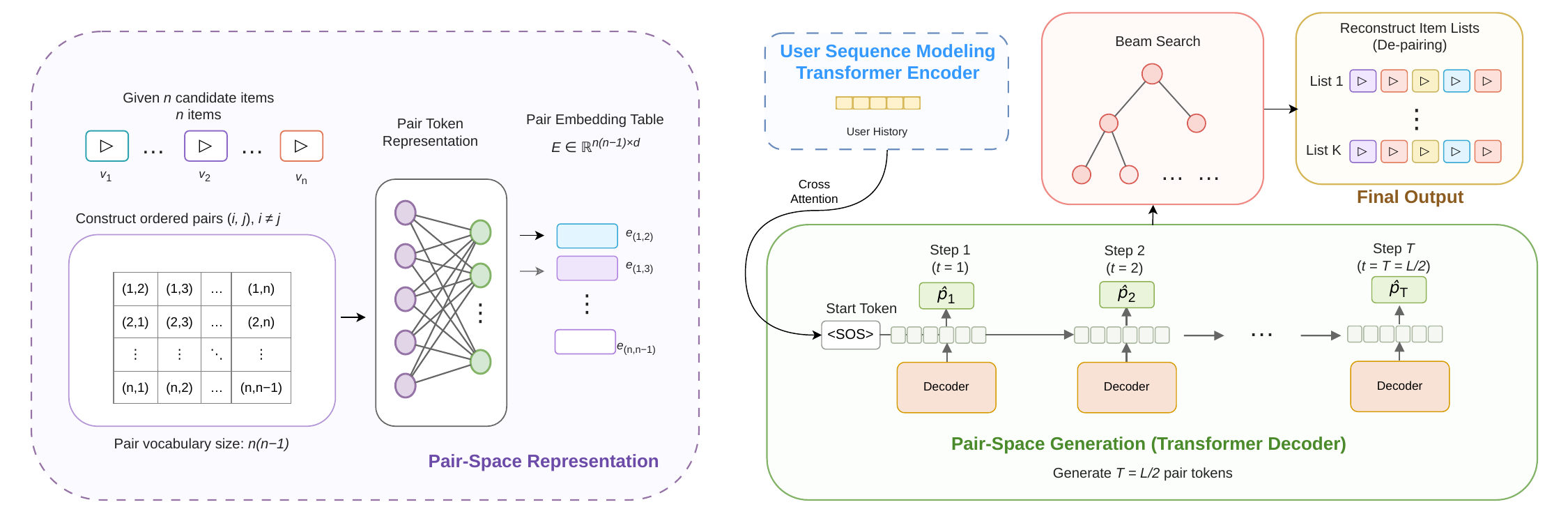}
\caption{PSG: Left part denotes the Pair-Token Representation and right part denotes the Transformer Enc-Dec module.}
\label{fig:module_framework}
\end{figure*}
\section{Method}
In this section, we begin with the problem formulation. We then describe the mapping from item space to pair-token space. Finally, we present the model architecture and training methodology.
\subsection{Reranking formulation}
At each request, given a user context $\bm{u}$ (e.g. user click logs) and a candidate set $\mathcal{V} = \{v_1, \ldots, v_n\}$ of size $n$, let $\Pi \subset \mathcal{V}^{L}$ denote the set of all ordered permutations of $L$ items from $\mathcal{V}$, with $|\Pi| = P(n,L)=\frac{n!}{(n-L)!}$. The reranking task is to produce an ordered list $\bm{\pi} = (\pi_1, \ldots, \pi_L)\in \Pi$ that maximizes the list-level utility $R( \bm{u}, \bm{\pi})$:
$$
    \bm{\pi}^{*}=\mathop{\arg\max}_{\bm{\pi} \in \Pi} R(\bm{u},\bm{\pi}).
$$

Following recent generative recommendation paradigms, we adopt the Generator-Evaluator (G-E) framework for reranking. Concretely, given $\bm{u}$ and $\mathcal{V}$, a generator $\mathcal{G}(\bm{\pi}|\bm{u},\mathcal{V})$ produces $m$ candidate permutations of length $L$. The evaluator $R(\bm{u},\cdot)$ scores each permutation at the sequence level, and the highest-scoring one is selected for exposure. The Auto-Regressive (AR) architecture (e.g., Transformer\cite{3295222.3295349}) fits this generator naturally, producing the permutation item by item:
\begin{equation}
\mathcal{G}_{\bm{\theta}}(\bm{\pi} | \bm{u}, \mathcal{V}) = \prod_{t=1}^L p_{\bm{\theta}}(\pi_t | \pi_{<t}, \bm{u}, \mathcal{V}),
\label{eq:item_decoding_prob_accu}
\end{equation}
where $\bm{\theta}$ denotes the learnable parameters and $p_{\bm{\theta}}(\cdot)$ is the predicted probability at each decoding step. As we focus primarily on the generator, we treat the evaluator $R(\cdot,\cdot)$ as an external plug-in module that provides supervisory signals to guide generation. Our framework is agnostic to the specific form of the evaluator; any sequence-level scoring function can be readily incorporated.

Despite AR's success in LLMs and recommendation systems, it suffers from linearly increasing latency and accumulated decoding errors along the decoding horizon, as discussed earlier. To address these issues, we propose PSG, which shortens the decoding path without sacrificing expressiveness while further reducing decoding errors for reranking.
\subsection{Tokenization in Pair-Space}
We assume $L$ is even (odd $L$ is handled by a trailing single-item slot; see \textbf{Appendix \ref{appendix:odd_L}}). We map the item space into pair-token space by combining two items into one token per request. Specifically, for a candidate set $\mathcal{V}$, we define the per-request pair-token vocabulary
$$
\mathcal{P}(\mathcal{V}) = \{(v_i, v_j) : i, j \in [n], i \neq j\}, \quad |\mathcal{P}(\mathcal{V})| = n(n-1).
$$
The decoding process in the item space of size $n$ is thus transferred to decoding in the pair-token space of size $n(n-1)$ over $L/2$ steps. The superscript $\bm{item}$ or $\bm{pair}$ is used to distinguish the corresponding solution space. In principle, any $k$ items can be combined into one token, yielding a token space of size $\frac{n!}{(n-k)!}$. Specifically, A choice of $k=2$ is adopted for our online deployment setting. The discussion of the choice of $k$ under various constraints is deferred to later sections.

A natural concern is whether this reformulation sacrifices any expressive power. We address this with one theoretical guarantee, collectively certifying PSG as a strict structural refactoring of item-space generation rather than an approximation. First, for any permutation $\bm{\pi}^{\text{item}} \in \Pi^{\text{item}}$, there exists a corresponding $\bm{\pi}^{\text{pair}} \in \Pi^{\text{pair}}$ that unfolds to the same item sequence, and vice versa. Hence, the mapping between $\Pi^{\text{item}}$ and $\Pi^{\text{pair}}$ is bijective. Second, for any item-space generator $\mathcal{G}^{\text{item}}$, there exists a pair-space generator $\mathcal{G}^{\text{pair}}$ that produces identical output distributions. Constructively, define
$$
\begin{aligned}
p((v_{i}, v_{j}) \mid \pi^{\text{pair}}_{<t},\bm{u},\mathcal{P}(\mathcal{V}))
&= p(v_i\mid\pi^{\text{item}}_{<2t-1},\bm{u},\mathcal{V}) \\
&\quad \cdot p(v_{j} \mid \pi^{\text{item}}_{<2t-1}\cup \{v_i\},\bm{u},\mathcal{V}).
\end{aligned}
$$
The product over $t=1,\ldots,L/2$ telescopes to
$$
\prod_{t=1}^{L/2} p(\pi^{\text{pair}}_t | \pi^{\text{pair}}_{<t}, \bm{u},\mathcal{P}(\mathcal{V})) = \prod_{s=1}^{L} p(\pi^{\text{item}}_s | \pi^{\text{item}}_{<s}, \bm{u},\mathcal{V}).
$$
The reverse direction follows symmetrically.
\begin{Corollary}[No expressiveness loss]
\label{corollary:bijection}
The set of list distributions realizable by pair-space autoregressive policies equals that of item-space policies.
\end{Corollary}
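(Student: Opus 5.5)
We prove the two inclusions separately, after fixing the unfolding map. Define $\phi:\Pi^{\text{pair}}\to\Pi^{\text{item}}$ by
\[
\phi\big((a_1,b_1),\ldots,(a_{L/2},b_{L/2})\big)=(a_1,b_1,\ldots,a_{L/2},b_{L/2}),
\]
where $\Pi^{\text{pair}}$ is the set of length-$L/2$ pair sequences whose $L$ unfolded items are pairwise distinct. This is exactly what the action mask enforces, since each token $(v_i,v_j)$ already has $i\neq j$ and the mask forbids reuse across tokens. The inverse is the grouping $\psi(\pi_1,\ldots,\pi_L)=((\pi_1,\pi_2),\ldots,(\pi_{L-1},\pi_L))$. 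Checking $\phi\circ\psi=\mathrm{id}$ and $\psi\circ\phi=\mathrm{id}$ is immediate, so $\phi$ is a bijection. Every list distribution we compare is therefore the pushforward under $\phi$ (or $\psi$) of a distribution on the other space, and it suffices to match probabilities of corresponding sequences.

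For ``item $\subseteq$ pair'', take any item-space AR policy $p^{\text{item}}$. Define the pair-space conditional by the product formula displayed above, with $\pi^{\text{item}}_{<2t-1}=\phi(\pi^{\text{pair}}_{<t})$. We first verify that this is a valid conditional distribution over the masked pair vocabulary. Summing over $v_j$ first and then over $v_i$ gives $1$, because each item conditional is normalized over the unused items; masked pairs receive zero mass because $p^{\text{item}}$ puts no mass on repeats. The telescoping identity then gives $\mathcal{G}^{\text{pair}}(\psi(\bm\pi))=\mathcal{G}^{\text{item}}(\bm\pi)$ for every $\bm\pi$.

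For ``pair $\subseteq$ item'', take any pair-space policy $q$. For odd positions $s=2t-1$, set $p(v_i\mid\pi_{<s})=\sum_{j}q((v_i,v_j)\mid\psi(\pi_{<s}))$, the marginal on the first coordinate. For even positions $s=2t$, set $p(v_j\mid\pi_{<s-1},v_i)=q((v_i,v_j)\mid\cdot)/\sum_{j'}q((v_i,v_{j'})\mid\cdot)$, the conditional on the second coordinate. When the denominator is zero we use an arbitrary valid distribution, since that branch has probability zero and does not affect the list distribution. The product of the two factors recovers $q$, so the item policy induces the same list distribution.

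The main point needing care is the reverse direction. A generic pair policy need not arise from an item factorization, so we construct one by the chain rule. The zero-denominator cases and the compatibility of masks (no repeats within or across tokens) are where a careless argument would fail. Once the chain rule is applied with these conventions, both inclusions hold, which gives equality of the realizable families.
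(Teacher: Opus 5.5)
Your proposal is correct and follows the paper's argument: the grouping/unfolding bijection between $\Pi^{\text{item}}$ and $\Pi^{\text{pair}}$, the product construction $p(v_i\mid\cdot)\,p(v_j\mid\cdot,v_i)$ for the item-to-pair direction, and the telescoping of the per-step products. For the reverse direction the paper only says it ``follows symmetrically''; your chain-rule decomposition (first-coordinate marginal, then second-coordinate conditional, with an arbitrary valid choice on zero-probability branches) supplies the argument that step actually needs.
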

\subsection{Model architecture}
The PSG framework consists of two core modules: Pair-Token Representation and Token-Level Generator. The overall architecture is depicted in Figure \ref{fig:module_framework} and will be elaborated in what follows.

\paragraph{Pair-Token Representation:} Each item $v$ is first mapped into a vector embedding $\bm{e}(v) \in \mathbb{R}^{d'}$ from multimodal features (content embeddings, statistical features, taxonomy IDs). Pair-token representations are then computed by a position-aware fusion module:
\begin{equation}
    \bm{e}(v_i, v_j) = \mathrm{MLP}\big([\bm{e}(v_i) + \bm{p}_1;\; \bm{e}(v_j) + \bm{p}_2]\big) \in \mathbb{R}^{d},
    \label{eq:pair_token_emb}
\end{equation}
where $\bm{p}_1, \bm{p}_2 \in \mathbb{R}^{d'}$ are learnable role embeddings, ensuring that $\bm{e}(v_i, v_j) \neq \bm{e}(v_j, v_i)$ ($i\neq j$) in general.

\paragraph{Token-Level Generator: }A standard Transformer Encoder-Decoder architecture is adopted. In RecSys, user behaviors serve as the most critical user features; accordingly, the user $\bm{u}$ is represented via their sequential behavior history. Specifically, the behavior sequence $\bm{X}$ is constructed as:
$$
    \bm{X} = [\bm{x}_1; \bm{x}_2; \cdots; \bm{x}_b] \in \mathbb{R}^{b \times d},
$$
where each $\bm{x}_i$ corresponds to the $i$-th historical behavior. Each behavior representation $\bm{x}_i$ is derived by fusing the interacted item embedding with contextual action signals (e.g., dwell time, interaction type, position) through an MLP: $\bm{x}_i = \mathrm{MLP}([\bm{e}_i; \bm{a}_i])$,
where $\bm{e}_i \in \mathbb{R}^{d_e}$ denotes the item embedding and $\bm{a}_i \in \mathbb{R}^{d_a}$ encodes the action signals (e.g., click type). The encoder maps $\bm{X}$ to a hidden representation via self-attention followed by a position-wise feed-forward network (FFN). Formally, given $\bm{X}$:
\begin{equation}
    \mathrm{SelfAttention}(\bm{Q}, \bm{K}, \bm{V}) = \mathrm{softmax}\left(\frac{\bm{Q}\bm{K}^\top}{\sqrt{d}}\right)\bm{V},
\end{equation}
where $\bm{Q} = \bm{X}\bm{W}_Q$, $\bm{K} = \bm{X}\bm{W}_K$, $\bm{V} = \bm{X}\bm{W}_V$, with $\bm{W}_Q, \bm{W}_K, \bm{W}_V \in \mathbb{R}^{d \times d}$ being learnable projections. The self-attention output is then transformed by a two-layer FFN with ReLU activation. The encoder output is $\mathrm{Enc}(\bm{u}) \in \mathbb{R}^{b \times d}$, where $b$ is the number of historical behaviors. For the decoder, let $<\text{SOS}>$ denote the start token. During training, the decoder input is $\{<\text{SOS}>, \pi^{\text{pair}}_1, \ldots, \pi^{\text{pair}}_{L/2-1}\}$, where $\bm{\pi}^{\text{pair}}$ is the pair-token sequence from the exposed permutation. The decoder hidden states serve as queries $\bm{Q}$, while the encoder output serves as keys $\bm{K}$ and values $\bm{V}$. The cross-attention is computed as:
\begin{equation}
    \mathrm{CrossAttention}(\bm{Q}, \bm{K}, \bm{V}) = \mathrm{softmax}\left(\frac{\bm{Q}\bm{K}^\top}{\sqrt{d}}\right)\bm{V},
\end{equation}
where $\bm{Q}$ is linearly projected from decoder states, and $\bm{K}, \bm{V}$ are derived from $\mathrm{Enc}(\bm{u})$. The cross-attention output is also passed through an FFN. The final decoder output is:
\begin{equation}
    \bm{H} = \mathrm{EncDec}(\bm{u}, \bm{\pi}^{\text{pair}}) = [\bm{h}_1; \ldots; \bm{h}_{L/2}] \in \mathbb{R}^{\frac{L}{2} \times d}.
    \label{eq:decoder_output}
\end{equation}

\subsection{Training}
\paragraph{Pair-Token Pretrain.}To enhance the representation ability of pair-token, PSG pretrains the token embeddings. Given the exposure permutation $\bm{\pi}=[\pi_1,\ldots,\pi_{L}]$ and the corresponding action (e.g. click) $\bm{y}_{\bm{\pi}}=[y_1,\ldots,y_{L}]\in \{0,1\}^{L}$, we construct the pretrain instances as follows.
First, we construct the pair-token by the exposure time, namely $S=\{(\pi_{i},\pi_{j})\}|i<j$. Each pair of $S$ has a pair-level label: if $y_i =1$ and $y_j =1$, $y(\pi_i,\pi_j)=1$; if either $y_i$ or $y_j$ equals one, $y(\pi_i,\pi_j)=0.5$; otherwise $y(\pi_i,\pi_j)=0$, the point-wise average of pair-token embedding(i.e. eq(\ref{eq:pair_token_emb})) approximate the pair-level label trained by mean-square-error. Formally
$$L_{pretrain}=\frac{2}{n(n-1)}\sum_{i<j}\left [\sigma \left(\frac{1}{d}\sum_{t=1}^de(\pi_i,\pi_j)_t\right)-y(\pi_i,\pi_j)\right ]^2,$$
where $\sigma(\cdot)$ means sigmoid function.

\paragraph{Next-Token Prediction.}The next-token prediction (NTP) paradigm is adopted to maintain consistency with training objectives commonly used in online recommender systems. At decoding step $t \in \{1, \ldots, L/2\}$, the decoder output $\bm{h}_t$ from Eq.~(\ref{eq:decoder_output}) is scored against all valid pair-tokens via inner product, followed by a softmax operation to yield the sampling probability over the token vocabulary. Each decoding step naturally corresponds to a multi-classification task, with the training objective defined as:
$$
\mathcal{L}_{\mathrm{NTP}} = -\frac{2}{L} \sum_{t=1}^{L/2} \log \frac{\exp\left(\bm{h}_t^\top \mathrm{emb}(\pi^{\text{pair}}_t)\right)}{\sum_{j=1}^{n(n-1)} \exp\left(\bm{h}_t^\top \mathrm{emb}(\mathrm{token}_j)\right)},
$$
\paragraph{Exploration by reinforcement} The pretrain and NTP task mainly address the exploitation of online logs. Here, we utilize the evaluator $R(\cdot,\cdot)$ to guide the generator by reinforcement, which improves the consistency and exploration ability. We adopt
Group Relative Policy Optimization (GRPO) \cite{DBLP:journals/corr/abs-2402-03300} which exhibits
excellent performance in LLM tasks. First, we sample a group permutation $S=\{\bm{s}_1,\ldots,\bm{s}_{G}\}$ of size $G$ by the generator $\mathcal{G}_{\theta_{old}}$, where $\theta_{old}$ means the former parameters. Then call the evaluator to estimate the corresponding reward $\bm{r}=\{r_1,\ldots,r_G\}$. The relative advantage for each sample can be calculated by
$$A_i=\frac{r_i-avg(\bm{r})}{std(\bm{r})+\epsilon}\ \ \ 1\le i\le G,$$
where $avg(\cdot)$ and $std(\cdot)$ mean average and standard deviation over reward vector respectively. A small $\epsilon$ guarantees non-zero denominator for numerical stability. Then the GRPO loss can be formed as
\begin{equation}
\begin{split}
    L_{GRPO}=\frac{1}{G}&\sum_{i=1}^{G}\min\left(\frac{\pi_{\theta}}{\pi_{\theta_{old}}}\cdot A_i,clip\left(\frac{\pi_{\theta}}{\pi_{\theta_{old}}},1-\delta,1+\delta \right)\right)\\ -
    &\beta D_{KL}(\pi_{\theta}||\pi_{\theta_{old}}),
\end{split}
\end{equation}
where $clip()$ means clipping operation and $D_{KL}()$ means the Kullback-LeiBler distance which avoids too much shift of current policy and the former policy. $\beta$ is a tunable hyper-parameter.

Finally, the training loss is 
$$
L_{loss}=L_{pretrain}+\lambda _{1}\cdot L_{ntp}+\lambda_{2}\cdot L_{GRPO},
$$
where $\lambda_{1}$ and $\lambda_{2}$ are tunable hyper-parameters to control the influence of each task.

\begin{figure}[htbp]
\centering
\includegraphics[width=\columnwidth]{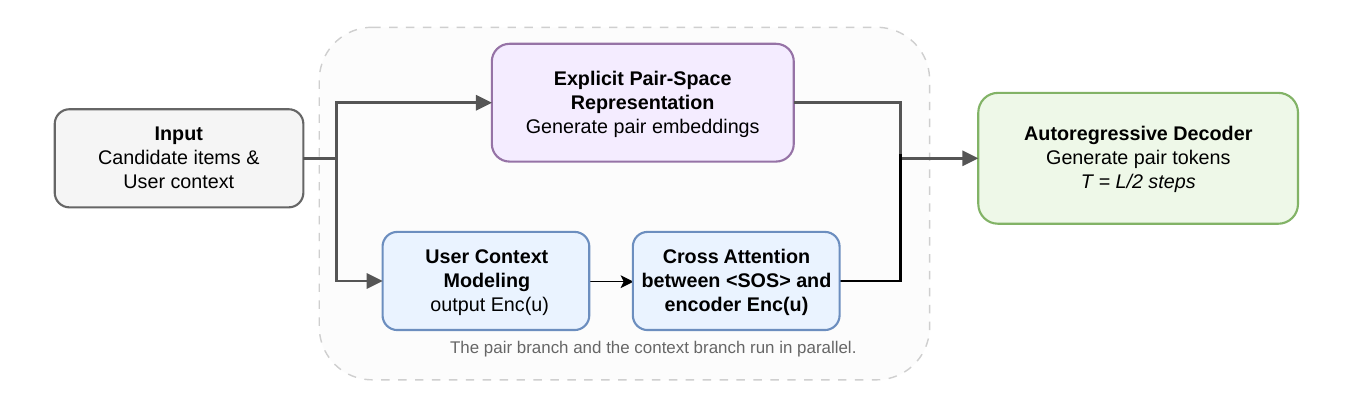}
\caption{Workflow of inference.}
\label{fig:work_flow}
\end{figure}
\section{Analysis for inference}
During inference, at each decoding step, the decoder hidden state is scored against all valid pair-tokens via inner product, followed by softmax to obtain token probabilities. Beam search is then applied to generate candidate pair-token sequences, which are subsequently unfolded into item lists of length $L$.
\subsection{Complexity analysis}
Compared with item-space generation, PSG introduces an extra Pair-Token Representation (PTR) module (Figure~\ref{fig:module_framework}) that maps item pairs into token embeddings. Crucially, both the user-encoder computation and the first decoder step (cross-attention with $<\text{SOS}>$) are independent of the PTR output, the decoder requires $\mathrm{Enc}(\bm{u})$ for cross-attention but not the pair embeddings until output projection at step~1. Since the user encoder and PTR operate on disjoint inputs, they execute concurrently on separate streams (Figure~\ref{fig:work_flow}), and the PTR overhead is absorbed within the user-encoder latency window. Even when executed serially, the PTR accounts for only around 3\% of total inference latency, rendering it negligible. Therefore, in the following complexity analysis, we omit the PTR module. We formalize this argument below and validate it empirically in Section~\ref{Inference Efficiency}.
\begin{proposition}[Pair Encoder Wall-Clock Absorption]
The Pair-Token Representation (PTR) and the user context encoder are data-independent and thus execute concurrently. PTR is a single batch GEMM with arithmetic intensity $\gg 1$, making it compute-bound with near-peak tensor-core throughput \cite{10.1145/1498765.1498785}; the user context encoder comprises attention kernels with arithmetic intensity $\approx 1$, which is memory-bandwidth-bound \cite{DBLP:conf/nips/DaoFERR22}. For $n \le 400$ and $b \ge 50$, the PTR completes within the user encoder window plus Decoder Step 1, which requires $\mathrm{Enc}(\bm{u})$ for cross-attention. Hence, the PTR adds no marginal wall-clock latency and is excluded from the complexity analysis.
\end{proposition}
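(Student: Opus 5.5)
The plan is a roofline-style latency comparison between two concurrent streams. Since no earlier result covers this, the argument rests on three ingredients: the dataflow of Figure~\ref{fig:work_flow}, FLOP and byte counts for each kernel, and the standard roofline bound $T \ge \max(\text{FLOPs}/F_{\text{peak}},\ \text{bytes}/B_{\text{mem}})$ of \cite{10.1145/1498765.1498785}, where $F_{\text{peak}}$ is peak throughput and $B_{\text{mem}}$ is memory bandwidth.

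\textbf{Concurrency.} First, establish that PTR and the user encoder really are concurrent. By Eq.~(\ref{eq:pair_token_emb}), PTR reads only the item embeddings $\bm{e}(v)$ and the role vectors $\bm{p}_1,\bm{p}_2$. The encoder reads only $\bm{X}$. Decoder step~1 (the $<\text{SOS}>$ query with cross-attention to $\mathrm{Enc}(\bm{u})$) uses the pair embeddings only in its final output projection. The dependency graph therefore has two disjoint branches that join at that projection, so the wall-clock time up to the join is $\max(T_{\text{PTR}},\ T_{\text{enc}}+T_{\text{dec},1})$. The whole claim thus reduces to showing $T_{\text{PTR}} \le T_{\text{enc}}+T_{\text{dec},1}$.

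\textbf{Cost of PTR.} Next, write PTR as one batched GEMM. Its input is the $n(n-1)\times 2d'$ concatenation matrix, pushed through the MLP weights. The first layer can even be factored into two $n\times d'$ GEMMs followed by a broadcast add, because the concatenation is linear in each argument.
\begin{itemize}
\item FLOPs scale as $n(n-1)\cdot 2d'\cdot d_h$, plus a similar term for the second layer.
\item Bytes moved are dominated by the weights and the $n(n-1)\times d$ output.
\end{itemize}
The arithmetic intensity is therefore $\Theta(\min(d_h,\ n^2 d'd_h/(n^2 d)))=\Theta(d_h)\gg 1$. This places PTR in the compute-bound regime, so $T_{\text{PTR}}\approx c_1 n^2 d' d_h/F_{\text{peak}}$.

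\textbf{Cost of the encoder.} For the encoder, use the FlashAttention-style accounting of \cite{DBLP:conf/nips/DaoFERR22}. At small batch the attention and softmax kernels stream $\Theta(b^2+bd)$ bytes per layer while doing $\Theta(b^2 d)$ FLOPs over many small kernels with launch overhead. The effective intensity is $\approx 1$, giving $T_{\text{enc}}\gtrsim c_2\, \ell\, (b^2+bd)/B_{\text{mem}}$ plus a per-kernel launch latency $\tau$ times the number of kernels, where $\ell$ is the number of layers. Decoder step~1 contributes a similar memory-bound term.

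\textbf{Comparison at the threshold (main obstacle).} The last step is to plug in $n=400$ and $b=50$ and show the inequality, then argue monotonicity: $T_{\text{PTR}}$ grows in $n$ and $T_{\text{enc}}$ grows in $b$, so the boundary case suffices. This step is the obstacle, because the inequality is not dimensionless. It depends on hardware constants ($F_{\text{peak}}/B_{\text{mem}}$, which is about $100$–$300$ FLOP/byte on modern GPUs) and on model widths. My approach would be to state explicit assumptions: typical $d',d_h\sim 128$, a ridge point near $150$ FLOP/byte, and $\ell\ge 2$. Under these, the PTR cost is a few tens of $\mu$s, while the launch-bound encoder plus decoder step takes hundreds of $\mu$s. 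I would then present the result as a sufficient condition of the form
\[
n^2 d' d_h \le \kappa\, \frac{F_{\text{peak}}}{B_{\text{mem}}}\,\ell\,(b^2+bd),
\]
and check it numerically for the stated regime. As a consistency check, I would cite the measured $\approx 3\%$ serial share, which implies $T_{\text{PTR}}<T_{\text{enc}}$ by a wide margin.
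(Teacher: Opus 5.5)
Your reduction to $T_{\mathrm{PTR}}\le T_{\mathrm{enc}}+T_{\mathrm{dec},1}$ through the two disjoint dependency branches is exactly the paper's argument. The paper never evaluates that inequality; it supports it only empirically (PTR $\approx 0.33$ ms, about $3\%$ of total latency, for $n\le 100$). The gap is in the quantitative step you add to go further, which fails as set up. Your sufficient condition keeps only the encoder's bandwidth term, and at $b=50$ that term is negligible. The $b^2+bd\approx 9\times 10^3$ elements per layer (of order $10^5$ even counting the layer weights) stream in about a microsecond or less at TB/s bandwidth. With your own constants ($n=400$, $d'=d_h=d=128$, ridge $150$, $\ell=2$), the left side is about $2.6\times 10^9$ and the right side about $2.7\times 10^6\,\kappa$. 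So the criterion you propose to check is violated at the boundary unless $\kappa\approx 10^3$, which no ratio of per-element byte and FLOP constants supplies. The ``hundreds of $\mu$s'' you quote comes entirely from the launch term $\tau K$, which you dropped from the criterion. The inequality that actually carries the argument is therefore $T_{\mathrm{PTR}}\le \tau K$, with $K$ the number of kernels in the encoder and decoder step~1. That is a statement about the dispatch overhead of a particular software stack, and the encoder's memory-boundedness plays no role in it.

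The PTR side of the comparison is also weaker than you assume. Intensity $\Theta(d_h)\gg 1$ does not imply compute-bound; it must exceed the ridge $F_{\mathrm{peak}}/B_{\mathrm{mem}}$, which you put at $100$--$300$ FLOP/byte. An FP32 layer with $d'=d_h=128$ over a materialized $n(n-1)\times 2d'$ input performs $4d'd_h$ FLOPs per pair while moving $2d'+d_h$ values, roughly $40$ FLOP/byte. Your factored variant is worse, since the $n(n-1)\times d_h$ broadcast-add costs $O(1)$ FLOPs per element written. Either way, at $n=400$ the first layer alone is about $10^{10}$ FLOPs. That takes at least $0.17$ ms at the $59.8$ TFLOPS FP32 peak of the paper's device: not ``a few tens of $\mu$s,'' but the same order as your launch-overhead estimate, so no margin is left. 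The paper's measurement points the same way. PTR takes $\approx 0.33$ ms, flat from $n=20$ to $n=100$ despite a $26\times$ change in FLOPs, so in the measured regime it is overhead-dominated rather than following $c_1 n^2 d' d_h/F_{\mathrm{peak}}$. Finally, the $3\%$ figure is a share of the total inference latency over all decoding steps, measured only for $n\le 100$. It says nothing about $T_{\mathrm{enc}}+T_{\mathrm{dec},1}$ in particular, and nothing at $n=400$. What your ingredients honestly support is the paper's weaker fallback: serial PTR overhead is empirically a few percent. The zero-marginal-latency claim up to $n=400$ would need measured values of $\tau K$ and of $T_{\mathrm{PTR}}$ at $n=400$ on the target hardware.
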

A notable property of the Transformer decoder is that its parameter count, comprising QKV projection matrices, FFN weights, and layer norms, is independent of the input sequence length. This independence ensures a fair complexity comparison between item-space and pair-space generation under identical model capacity.

Beyond the decoder, overall time complexity is also affected by the encoder. In cases where the encoder processes a very long user history, it may dominate the entire procedure and become the latency bottleneck. Conversely, when the encoder is lightweight, halving the decoding horizon from $L$ to $L/2$ yields a near-$2\times$ reduction in FLOPs for the weight-matrix and cross-attention terms, and a $4\times$ reduction for the KV-cache read term. Since the KV-cache term is memory-bandwidth-bound and accounts for a disproportionate share of wall-clock latency, the overall decode speedup falls between $2\times$ and $4\times$, reaching $1.83\times$ in real deployment (Section~\ref{sec:experiment}). A detailed breakdown of each computational component is provided in the theorem below; the complete proof is deferred to Appendix~\ref{proof:time_complexity}.
\begin{theorem}[Generation Complexity]
\label{thm:complexity}
Let the user encoder be an $H_{\mathrm{enc}}$-layer transformer with hidden size $d$ and FFN width $d_{\mathrm{ff}}$, operating on a user-history sequence of length $b$. The decoder is an $H$-layer transformer with hidden size $d$ and FFN width $d_{\mathrm{ff}}$, operating over a per-request vocabulary of size $|\mathcal{V}|$. At each autoregressive step $t$, the decoder performs self-attention over its own KV cache (length $t$) and cross-attention over the user-encoder output $\mathbf{H}_{\mathrm{enc}} \in \mathbb{R}^{b \times d}$, then projects its hidden state $h_t \in \mathbb{R}^d$ over the vocabulary. The total per-request FLOPs decompose as
\begin{align}
T &= T_{\mathrm{user}} + T_{\mathrm{dec}}(S) \label{eq:total} \\
T_{\mathrm{user}} &= O\!\big(H_{\mathrm{enc}}(d^2 + d \cdot d_{\mathrm{ff}}) \cdot b + H_{\mathrm{enc}} \cdot d \cdot b^2\big) \label{eq:user} \\
T_{\mathrm{dec}}(S) &= \underbrace{O\!\big(H(d^2 + d \cdot d_{\mathrm{ff}} + b \cdot d) \cdot S\big)}_{T_{\mathrm{fixed}}:\;\text{weight-matrix + cross-attn (per-step fixed)}} \nonumber \\
&\quad + \underbrace{O(H \cdot d \cdot S^2)}_{T_{\mathrm{kv}}:\;\text{KV-cache read (sequential)}} + \underbrace{O(|\mathcal{V}| \cdot d \cdot S)}_{T_{\mathrm{proj}}:\;\text{output projection (parallel)}} \label{eq:dec}
\end{align}
The user-encoder cost $T_{\mathrm{user}}$ is identical in item space and pair space (computed once per request). The decoder cost differs by generation horizon $S$ and vocabulary size $|\mathcal{V}|$:
\begin{table}[h]
\centering
\caption{Per-request FLOPs: item space vs.\ pair space.}
\small
\setlength{\tabcolsep}{3.5pt}
\begin{tabular}{lccc}
\toprule
Term & Item ($S\!=\!L$, $|\mathcal{V}|\!=\!n$) & Pair ($S\!=\!L/2$, $|\mathcal{V}|\!=\!n^2$) & Factor \\
\midrule
$T_{\mathrm{fixed}}$ & $O(H(d^2\!+\!dd_{\mathrm{ff}}\!+\!bd)L)$ & $O(H(d^2\!+\!dd_{\mathrm{ff}}\!+\!bd)L/2)$ & $2\times\;\downarrow$ \\
$T_{\mathrm{kv}}$ & $O(HdL^2)$ & $O(HdL^2/4)$ & $\mathbf{4\times}\;\downarrow$ \\
$T_{\mathrm{proj}}$ & $O(ndL)$ & $O(n^2dL/2)$ & $n/2\times\;\uparrow$ \\
\bottomrule
\end{tabular}
\end{table}

\end{theorem}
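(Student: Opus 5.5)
The plan is a direct FLOP count, done layer by layer and step by step, and then summed. I treat each matrix–vector product of a $d\times d'$ matrix as $O(dd')$ work and count multiply-adds only. I ignore layer norms, softmax normalisations and residual additions, since each is $O(d)$ or $O(\text{length})$ per position and is therefore dominated.

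\emph{The encoder term} (\ref{eq:user}). Each of the $H_{\mathrm{enc}}$ layers does four pieces of work:
\begin{itemize}
\item the $Q,K,V$ and output projections on $b$ tokens, costing $O(bd^2)$;
\item the score matrix $\bm{Q}\bm{K}^\top$, costing $O(b^2 d)$;
\item the aggregation $\mathrm{softmax}(\cdot)\bm{V}$, costing another $O(b^2 d)$;
\item the two-layer FFN, costing $O(b\, d\, d_{\mathrm{ff}})$.
\end{itemize}
Summing over layers gives (\ref{eq:user}). This computation depends only on $\bm{u}$ and is run once per request. It is therefore identical in item space and pair space; note that by the Proposition on wall-clock absorption the PTR cost is excluded.

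\emph{The decoder terms} (\ref{eq:dec}). I fix a step $t\le S$ and a decoder layer, and split the work into four parts:
\begin{itemize}
\item The single new hidden state passes through the self-attention $Q,K,V,O$ projections, the cross-attention $Q$ and $O$ projections, and the FFN, costing $O(d^2+d\,d_{\mathrm{ff}})$. The cross-attention keys and values of $\mathbf{H}_{\mathrm{enc}}$ are projected once, at cost $O(bd^2)$, and absorbed into $T_{\mathrm{user}}$ or the first step.
\item Cross-attention of one query against $b$ cached keys and values costs $O(bd)$.
\item Self-attention against the KV cache of length $t$ costs $O(td)$.
\item The output projection is done once per step, after the last layer: it takes inner products of $h_t$ with all $|\mathcal{V}|$ token embeddings, costing $O(|\mathcal{V}|d)$.
\end{itemize}
Summing over the $H$ layers and over $t=1,\dots,S$ gives
\[
H(d^2+d\,d_{\mathrm{ff}}+bd)\,S \;+\; Hd\sum_{t=1}^{S}t \;+\; |\mathcal{V}|\,dS .
\]
Using $\sum_{t=1}^{S}t=S(S+1)/2=O(S^2)$ yields the three terms $T_{\mathrm{fixed}}$, $T_{\mathrm{kv}}$ and $T_{\mathrm{proj}}$.

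\emph{The table.} I substitute the two settings into (\ref{eq:dec}):
\begin{itemize}
\item item space: $S=L$, $|\mathcal{V}|=n$;
\item pair space: $S=L/2$, $|\mathcal{V}|=n(n-1)\le n^2$.
\end{itemize}
The factors then follow by division. $T_{\mathrm{fixed}}$ is linear in $S$, so it drops by $2$. $T_{\mathrm{kv}}$ is quadratic in $S$, so it drops by $4$. $T_{\mathrm{proj}}$ changes by $\frac{n^2 (L/2)}{nL}=n/2$, an increase.

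I expect the main obstacle to be the bookkeeping convention for $T_{\mathrm{kv}}$, not the algebra. If beam search with width $B$ is used, every term scales by $B$ uniformly, so the ratios are unaffected. I would state this explicitly and treat $B$ as a constant. I would also remark that the $n/2\times$ growth of $T_{\mathrm{proj}}$ is a single batched GEMM across all steps' candidates. It is therefore parallel rather than sequential, which is the sense of the labels in (\ref{eq:dec}). The theorem itself, however, is purely a FLOP statement, so no latency argument is needed in the proof.
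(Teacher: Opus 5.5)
Your proposal is correct and follows the paper's proof essentially step for step. Both count the encoder per layer, split each decoder step into projections/FFN, $O(bd)$ cross-attention, $O(td)$ KV-cache attention and the $O(|\mathcal{V}|d)$ output projection, sum via $\sum_{t\le S}t=O(S^2)$, and then substitute $(S,|\mathcal{V}|)=(L,n)$ versus $(L/2,n^2)$. Your one deviation is to charge the cross-attention $K,V$ projections of $\mathbf{H}_{\mathrm{enc}}$ as a one-off $O(Hbd^2)$, which is more accurate than the paper's ``$2d^2$, cached and amortized''. That cost does not fit inside the first step of $T_{\mathrm{fixed}}$, but it is absorbed by $T_{\mathrm{user}}$ when $H=O(H_{\mathrm{enc}})$, and it is independent of $S$ and $|\mathcal{V}|$ anyway, so the table is unaffected. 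Separately, your side remark that $T_{\mathrm{proj}}$ is a single GEMM across all steps is not right, since $h_t$ depends on the token chosen at step $t-1$; it plays no role in the FLOP count.
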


\subsection{Decoding Error reduction}
Intuitively, shortening the decoding horizon reduces error accumulation, yet expanding the vocabulary from $n$ to $n^2$ may increase per-step error due to both a larger action space and training over $n^2$ classification labels. The net effect is governed by the ratio $\bar{\epsilon}_{\mathrm{item}} / \bar{\epsilon}_{\mathrm{pair}}$: the $4\times$ gain from horizon reduction is partially offset by any increase in per-step mismatch. Under the encoder-saturated regime ($\bar{\epsilon}_{\mathrm{pair}} \approx \bar{\epsilon}_{\mathrm{item}}$), which holds for typical deployment configurations, PSG achieves a nearly $4\times$ error reduction. A formal statement is given below, with the proof deferred to \textbf{Appendix~\ref{proof:error_reduction}}.

\begin{theorem}[Error Compounding in Autoregressive Decoding]
\label{thm:error}
Consider an autoregressive rerank policy generating a length-$L$ list. At each step $t$, the policy's action distribution deviates from the optimal distribution by at most $\bar{\epsilon}$ in total variation. In item space, the per-step vocabulary size is $n$; in pair space, the per-step vocabulary size is $n(n\!-\!1) \approx n^2$. Denote the per-step mismatches as $\bar{\epsilon}_{\mathrm{item}}$ and $\bar{\epsilon}_{\mathrm{pair}}$ respectively, where
\[
\bar{\epsilon}_{\mathrm{pair}} = \bar{\epsilon}_{\mathrm{base}} + \bar{\epsilon}_{\mathrm{enc}}(2) + \bar{\epsilon}_{\mathrm{cov}}(2)
\]
accounts for three error channels: (i) $\bar{\epsilon}_{\mathrm{base}}$, baseline training noise (weakly dependent on vocabulary size); (ii) $\bar{\epsilon}_{\mathrm{enc}}(2)$, pair-encoder generalization error on $n^2$ representations; (iii) $\bar{\epsilon}_{\mathrm{cov}}(2)$, coverage error from finite GRPO group size $G$ over $n^2$ actions.

The trajectory-level and reward-level error bounds are:

\begin{table}[h]
\centering
\caption{Error compounding: item space vs.\ pair space.}
\small
\setlength{\tabcolsep}{5pt}
\begin{tabular}{lcc}
\toprule
Bound & Item Space ($H = L$) & Pair Space ($H = L/2$) \\
\midrule
Trajectory TV mismatch & $L \cdot \bar{\epsilon}_{\mathrm{item}}$ & $(L/2) \cdot \bar{\epsilon}_{\mathrm{pair}}$ \\
Reward sub-optimality & $O(R_{\max} L^2 \bar{\epsilon}_{\mathrm{item}})$ & $O(R_{\max} (L/2)^2 \bar{\epsilon}_{\mathrm{pair}})$ \\
\bottomrule
\end{tabular}
\end{table}

The net improvement ratio is
\[
\frac{\mathrm{SubOpt}_{\mathrm{item}}}{\mathrm{SubOpt}_{\mathrm{pair}}} = 4 \cdot \frac{\bar{\epsilon}_{\mathrm{item}}}{\bar{\epsilon}_{\mathrm{pair}}}.
\]
When $\bar{\epsilon}_{\mathrm{pair}} \leq \bar{\epsilon}_{\mathrm{item}}$ (encoder-saturated regime), the full $4\times$ dividend is realized. When $\bar{\epsilon}_{\mathrm{pair}} > \bar{\epsilon}_{\mathrm{item}}$, the dividend is partially offset. For the deployment regime $n \leq 400$, pretrain density $\rho_2 \gg 10^3$, and $G = 32$: $\bar{\epsilon}_{\mathrm{pair}} \approx \bar{\epsilon}_{\mathrm{item}}$, yielding $\approx 4\times$ improvement.
\end{theorem}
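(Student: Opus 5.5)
The proof has two parts: a trajectory-level total-variation bound obtained by the standard hybrid (telescoping) argument over decoding steps, and a reward-level bound obtained by the simulation-lemma / compounding-error argument from imitation learning, where an outcome-only reward produces quadratic dependence on the horizon. Throughout, let $H$ be the number of decoding steps ($H=L$ in item space, $H=L/2$ in pair space). Let $p_\theta$ be the learned policy and $p^*$ the optimal one, with per-step deviation $\mathrm{TV}(p_\theta(\cdot\mid s_{<t}),p^*(\cdot\mid s_{<t}))\le\bar\epsilon$ at every prefix. By Corollary~\ref{corollary:bijection}, both policies are distributions over the same set $\Pi$ of length-$L$ lists. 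The two horizons can therefore be compared on a common footing; only $H$ and $\bar\epsilon$ change.

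\textbf{Step 1 (trajectory TV).} Define hybrid distributions $Q_k$ that follow $p_\theta$ for the first $k$ steps and $p^*$ afterwards. Then $Q_0=P^*$ and $Q_H=P_\theta$. By the triangle inequality, $\mathrm{TV}(P_\theta,P^*)\le\sum_{k=1}^{H}\mathrm{TV}(Q_k,Q_{k-1})$. The distributions $Q_k$ and $Q_{k-1}$ differ only in the step-$k$ conditional, averaged over the prefix distribution. Hence each term is at most $\mathbb{E}_{s_{<k}}\mathrm{TV}(p_\theta(\cdot\mid s_{<k}),p^*(\cdot\mid s_{<k}))\le\bar\epsilon$. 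This gives $H\bar\epsilon$, i.e.\ $L\bar\epsilon_{\mathrm{item}}$ in item space and $(L/2)\bar\epsilon_{\mathrm{pair}}$ in pair space, which is the first row of the table.

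\textbf{Step 2 (reward sub-optimality).} The naive bound $|\mathbb{E}_{P_\theta}R-\mathbb{E}_{P^*}R|\le 2R_{\max}\mathrm{TV}$ is only linear in $H$. To obtain the stated $O(R_{\max}H^2\bar\epsilon)$ I would follow the behavior-cloning compounding argument (Ross--Bagnell style), decomposing reward over the $H$ steps. The outcome-only reward $R(\bm u,\bm\pi)\in[0,R_{\max}]$ is attributed to steps via a value function $V^*_t$ of the remaining horizon, with $|V^*_t|\le R_{\max}(H-t+1)$ when the reward is viewed as a sum of at most $H$ per-step contributions, each bounded by $R_{\max}$. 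The performance-difference lemma then gives
\[
\mathbb{E}_{P^*}R-\mathbb{E}_{P_\theta}R=\sum_{t=1}^{H}\mathbb{E}_{s_{<t}\sim P_\theta}\big[\langle p^*(\cdot\mid s_{<t})-p_\theta(\cdot\mid s_{<t}),\,Q^*_t(s_{<t},\cdot)\rangle\big].
\]
Each summand is at most $2\bar\epsilon\cdot R_{\max}(H-t+1)$. Summing over $t$ gives $\bar\epsilon R_{\max}H(H+1)=O(R_{\max}H^2\bar\epsilon)$. Substituting $H=L$ and $H=L/2$ yields the second row of the table. The ratio of the two bounds is then $L^2\bar\epsilon_{\mathrm{item}}/((L/2)^2\bar\epsilon_{\mathrm{pair}})=4\bar\epsilon_{\mathrm{item}}/\bar\epsilon_{\mathrm{pair}}$, from which the two regime statements ($\le4$ fully realized versus partially offset) are immediate.

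\textbf{Step 3 (the deployment regime).} I would bound each channel of $\bar\epsilon_{\mathrm{pair}}$ separately. The term $\bar\epsilon_{\mathrm{base}}$ is shared with item space by assumption. The encoder term $\bar\epsilon_{\mathrm{enc}}(2)$ would be controlled by a standard generalization bound of order $O(1/\sqrt{\rho_2})$, where $\rho_2$ is the number of pretraining pair instances per pair token; this is negligible for $\rho_2\gg10^3$. For $\bar\epsilon_{\mathrm{cov}}(2)$, sampling with $G=32$ under the softmax policy concentrates on the high-probability pairs, so the uncovered probability mass is small. I expect Step 2 to be the main obstacle. The reward is outcome-only rather than per-step, so the quadratic dependence needs a modeling assumption: either per-step reward decomposability, or that the suboptimality metric charges $R_{\max}$ for each remaining step after a deviation. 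I would state this assumption explicitly, citing the standard imitation-learning bound. Step 3 is likewise heuristic and would be presented as an order-of-magnitude justification rather than a rigorous bound.
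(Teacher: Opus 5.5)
Your proposal is sound, and its skeleton matches the paper's: a TV telescoping bound, the performance-difference lemma summed over $H$ steps, a ratio of the resulting bounds, and an order-of-magnitude channel check. Your hybrid construction is simply the rigorous form of the paper's ``chain rule for TV under deterministic transitions.''

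The genuine difference is how you charge the step-$t$ term of the performance-difference lemma. The paper bounds it by the \emph{accumulated drift of earlier steps}, $R_{\max}\,t\,\bar{\epsilon}$, citing Ross--Bagnell. You bound it by $2\bar{\epsilon}$ times the \emph{range of the remaining-horizon} $Q^*_t$, namely $R_{\max}(H-t+1)$. Since $\sum_t t=\sum_t (H-t+1)$, both produce a bound of order $R_{\max}H^2\bar{\epsilon}$.

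Your accounting is the one that is actually justified. The lemma already averages over the learner's drifted prefix distribution, and the TV hypothesis holds at every prefix, so no separate drift factor arises. With an outcome-only reward the step-$t$ term is at most $R_{\max}\bar{\epsilon}$, so the paper's $R_{\max}t\bar{\epsilon}$ holds only because it is weaker. The paper's version fits its intuitive story of errors shifting later states. Yours shows exactly where the factor of $H$ enters.

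In Step 3 the paper compares channels across spaces rather than showing the pair channels are small outright. It argues $\bar{\epsilon}_{\mathrm{enc}}(2)\approx\bar{\epsilon}_{\mathrm{enc}}(1)$ from $\rho_2\approx 6\times10^4$, and $\bar{\epsilon}_{\mathrm{cov}}(2)\approx\bar{\epsilon}_{\mathrm{cov}}(1)$ from using $G=32$ in pair space against $G=16$ in item space. Neither version is rigorous.

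On the obstacle you flag: you need no modeling assumption, but the reason cuts against the $4\times$.
For $R\in[0,R_{\max}]$ you have $Q^*_t\in[0,R_{\max}]\subset[0,R_{\max}(H-t+1)]$, so your Step 2 goes through verbatim for outcome-only rewards. More directly, your Step 1 together with the ``naive'' bound gives $\mathbb{E}_{P^*}R-\mathbb{E}_{P_\theta}R\le R_{\max}\,\mathrm{TV}(P_\theta,P^*)\le R_{\max}H\bar{\epsilon}$. This linear bound is stronger than, and implies, the table entry.

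The linear bound is attained up to a constant, for example by a chain in which any deviation forfeits the reward. So under outcome-only rewards the quadratic rows are loose by a factor of $H$. The quantity $4\bar{\epsilon}_{\mathrm{item}}/\bar{\epsilon}_{\mathrm{pair}}$ is therefore a ratio of upper bounds, not of sub-optimalities; the tight bounds give $2\bar{\epsilon}_{\mathrm{item}}/\bar{\epsilon}_{\mathrm{pair}}$. This applies equally to the paper's proof.

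Per-step decomposability does not rescue the $4$ either. If each item contributes at most $R_{\max}$, a pair step carries up to $2R_{\max}$. Your computation then yields a ratio of $\frac{2(L+1)}{L+2}\cdot\frac{\bar{\epsilon}_{\mathrm{item}}}{\bar{\epsilon}_{\mathrm{pair}}}$. So state the second row of the table as a valid but loose upper bound, and present the $4\times$ as a comparison of those bounds.
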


\subsection{Action-Space Operability}
Currently, a token consists of $k=2$ items; here we illustrate this choice is adopted due to $k=2$ get the suitable trade-off between accuracy and efficiency. The vocabulary size is $|\mathcal{P}_k| \approx n^k$, which grows exponentially with $k$. At each decoding step, the generator compute logits $\bm{h}_t^\top E_{\mathrm{vocab}}$ over the full vocabulary for masked softmax, incurring $O(n^k d)$ inner products per step. Meanwhile, the Pair-Token Representation module may exceed the user history encoder in computational cost, potentially becoming a bottleneck for certain combinations of $k$ and $n$ ($n=60$ in our online setting).

Table~\ref{tab:vocab_cost} illustrates the growth pattern. At $k=2$, the vocabulary size is $3{,}540$ for $n=60$ and $160{,}000$ for $n=400$, which remains manageable. At $k=3$, however, the vocabulary grows to $216{,}000$ for $n=60$ and $64\mathrm{M}$ for $n=400$, already straining the sub-30ms latency budget in industrial reranking, particularly given the additional overhead of PTR encoding and beam search decoding. For $k \ge 4$, the vocabulary becomes infeasible (e.g., $\ge 13\mathrm{M}$ for $n=60$ and $\ge 2.6\mathrm{B}$ for $n=400$). In this regime, the computational cost of output projection and PTR encoding dominates, effectively neutralizing the benefit of shortened decoding horizons. We further demonstrate that $k=2$ achieves the best trade-off through empirical studies, with detailed results presented in Section~\ref{sec:k_choose}.
\begin{table}[h]
\centering
\caption{Per-step vocabulary size and output-projection cost.}
\begin{tabular}{cccc}
\toprule
$k$ & $n = 60$ & $n = 400$ & Verdict \\
\midrule
1 & 60 & 400 & Trivial \\
2 & 3,540 & 160k & Comfortable \\
3 & 216k & 64M & Borderline / Infeasible \\
$\geq 4$ & $\geq 13\text{M}$ & $\geq 2.6\text{B}$ & Infeasible \\
\bottomrule
\end{tabular}
\label{tab:vocab_cost}
\end{table}

\section{Experiment}
\label{sec:experiment}
\begin{table}[t]
    \centering
    \caption{Statistics of the datasets used in the experiments.}
    \label{tab:dataset_stats}
    \resizebox{\columnwidth}{!}{
    \begin{tabular}{lcccc}
        \toprule
        Dataset & \# Requests & \# Items & Candidate Pool Size \\
        \midrule
        ML-1M & 161,646 & 3,043 & 50 \\
        Amazon-Books & 309,917 & 38,121 & 50 \\
        RecFlow & 3,308,233 & 14,181,768 & 120 \\
        \bottomrule
    \end{tabular}
    }
\end{table}

\begin{table*}[t]
\centering
\caption{Performance comparison on three datasets.
N@6, P@6, R@6 denotes NDCG@6, Precision@6 and Recall@6.}
\label{tab:main_results}
\setlength{\tabcolsep}{3.2pt}
\resizebox{\textwidth}{!}{
\begin{tabular}{ll*{3}{cccc}}
\toprule
\multirow{2}{*}{Category}
& \multirow{2}{*}{Model}
& \multicolumn{4}{c}{ML-1M}
& \multicolumn{4}{c}{Amazon-Books}
& \multicolumn{4}{c}{RecFlow} \\
\cmidrule(lr){3-6}
\cmidrule(lr){7-10}
\cmidrule(lr){11-14}
& & N@6 & P@6 & R@6 & F1@6
  & N@6 & P@6 & R@6 & F1@6
  & N@6 & P@6 & R@6 & F1@6 \\
\midrule

\multirow{6}{*}{\shortstack{Generator-\\Only}}
& DNN
& 0.5950 & 0.4539 & 0.5542 & 0.4876
& 0.6448 & 0.5072 & 0.6125 & 0.5472
& 0.1584 & 0.0793 & 0.2069 & 0.1084 \\

& DCN
& 0.5981 & 0.4561 & 0.5573 & 0.4901
& 0.6683 & 0.5298 & 0.6461 & 0.5701
& 0.1597 & 0.0795 & 0.2083 & 0.1088 \\

& Seq2Slate
& 0.6222 & 0.4867 & 0.5927 & 0.5225
& 0.6952 & 0.5654 & 0.6871 & 0.6078
& 0.1693 & 0.0821 & 0.2134 & 0.1130 \\

& DLCM
& 0.6061 & 0.4643 & 0.5667 & 0.4988
& 0.6597 & 0.5242 & 0.6396 & 0.5641
& 0.1747 & 0.0861 & 0.2240 & 0.1169 \\

& SetRank
& 0.7154 & 0.5720 & 0.6933 & 0.6132
& 0.8014 & 0.6635 & 0.8145 & 0.7156
& 0.1823 & 0.0896 & 0.2344 & 0.1225 \\

& PRM
& 0.7081 & 0.5639 & 0.6843 & 0.6049
& 0.7992 & 0.6603 & 0.8107 & 0.7122
& 0.1840 & 0.0905 & 0.2368 & 0.1238 \\

\midrule

\multirow{5}{*}{\shortstack{Generator-\\Evaluator}}

& PIER
& 0.7146 & 0.5721 & 0.6932 & 0.6134
& 0.7987 & 0.6610 & 0.8120 & 0.7131
& 0.1910 & 0.0935 & 0.2431 & 0.1277 \\

& NAR4Rec
& 0.7348 & 0.5915 & 0.7168 & 0.6342
& 0.8188 & 0.6786 & 0.8351 & 0.7323
& 0.1792 & 0.0880 & 0.2297 & 0.1203 \\

& JDRec
& \underline{0.7399} & 0.5972 & 0.7233 & 0.6402
& \underline{0.8255} & 0.6832 & \underline{0.8409} & 0.7528
& 0.1832 & 0.0898 & 0.2345 & 0.1227 \\

& OMGRec
& 0.7319 & 0.5886 & 0.7131 & 0.6310
& 0.8040 & 0.6642 & 0.8145 & 0.7160
& 0.1866 & 0.0913 & 0.2385 & 0.1247 \\

& GoalRank
& 0.7232 & \underline{0.6135} & \underline{0.7381} & \underline{0.6701}
& 0.8247 & \underline{0.6905} & 0.8302 & \underline{0.7539}
& \underline{0.1981} & \underline{0.0942} & \underline{0.2456} & \underline{0.1285} \\

\midrule

Ours & \textbf{PSG}
& \textbf{0.7548} & \textbf{0.6431} & \textbf{0.7733} & \textbf{0.7022}
& \textbf{0.8652} & \textbf{0.7725} & \textbf{0.8435} & \textbf{0.8063}
& \textbf{0.2147} & \textbf{0.1024} & \textbf{0.2668} & \textbf{0.1396} \\

\cmidrule{1-14}
& \textbf{Improvement}
& \textbf{+2.01\%} & \textbf{+4.83\%} & \textbf{+4.77\%} & \textbf{+4.79\%}
& \textbf{+4.81\%} & \textbf{+11.88\%} & \textbf{+0.31\%} & \textbf{+6.95\%}
& \textbf{+8.38\%} & \textbf{+8.70\%} & \textbf{+8.63\%} & \textbf{+8.64\%} \\

\bottomrule
\end{tabular}
}
\end{table*}

\subsection{Offline Experiments}
\label{sec:offline_experiments}

\subsubsection{Datasets}

We conduct offline experiments on two public datasets, \textbf{ML-1M} ~\cite{harper2016movielens} and \textbf{Amazon-Books}~\cite{he2016ups}, as well as one industrial dataset collected from a short-video platform, namely \textbf{RecFlow}~\cite{liu2024recflow}. Detailed dataset statistics are provided in Table \ref{tab:dataset_stats}.

Since \textbf{ML-1M} and \textbf{Amazon-Books} do not provide request-level candidate pools for standard reranking instances, we first train a Matrix Factorization (MF) model as the retriever. Specifically, the retriever computes relevance scores between each user and all items, from which we select the top-200 items as the retrieval pool. For each training instance, we then randomly sample 50 items from this top-200 pool as candidate items for reranking. The last six interactions in each user’s historical behavior sequence are used as the ground-truth target list for reranking. The maximum historical sequence length is set to 100 for both \textbf{ML-1M} and \textbf{Amazon-Books}. \textbf{RecFlow} dataset directly provides request-level candidate sets and associated features. For each request, we retain 60 candidate items, and the target reranked list is also set to length 6. The maximum historical sequence length for \textbf{RecFlow} is set to 50.

\subsubsection{Baselines}
We compare PSG against two categories of baselines. The first category consists of Generator-Only methods, including pointwise models (\textbf{DNN}~\cite{DBLP:conf/recsys/CovingtonAS16}, \textbf{DCN}~\cite{DBLP:conf/kdd/WangFFW17}), context-aware refinement models (\textbf{DLCM}~\cite{DBLP:conf/sigir/AiBGC18}, \textbf{PRM}~\cite{DBLP:conf/recsys/PeiZZSLSWJGOP19}), and sequence generation models (\textbf{Seq2Slate}~\cite{DBLP:journals/corr/abs-1810-02019}, \textbf{SetRank}~\cite{DBLP:conf/sigir/PangXALCW20}). The second category consists of Generator-Evaluator methods (\textbf{PIER}~\cite{DBLP:conf/kdd/ShiYWWGLWWW23}, \textbf{NAR4Rec}~\cite{10.1145/3637528.3671645}, \textbf{JDRec}~\cite{DBLP:conf/atal/ZhaoLFGZHCPD24}, \textbf{OMGRec}~\cite{DBLP:conf/www/XuXCLSLWZZ26}, \textbf{GoalRank}~\cite{zhang2026goalrank}).

For PSG, we adopt the same listwise evaluator design as GoalRank (i.e., the OCPM module in PIER). For the generator, we employ a Transformer encoder-decoder architecture, where both the encoder and decoder consist of one layer with one attention head. We first warm up PSG with supervised training for 20 epochs, followed by GRPO optimization. In GRPO, the group size is set to 16, where 16 rerank lists are sampled for each input to compute group-relative advantages. The clipping ratio $\epsilon$ is set to 0.2, and the KL penalty coefficient $\beta$ is set to 0.01. The weights of the NTP loss and GRPO loss are set to $\lambda_1=1$ and $\lambda_2=0.1$, respectively. During inference, beam search with a beam size of 4 is employed. Settings of other baselines follow their corresponding papers.

\subsubsection{Main Results}
As shown in Table~\ref{tab:main_results}, PSG achieves the best overall performance across all three datasets. On ML-1M, it improves NDCG@6, Precision@6, Recall@6, and F1@6 over the strongest competing baselines by 2.01\%, 4.83\%, 4.77\%, and 4.79\%, respectively. On RecFlow, PSG achieves consistent gains on all four metrics, with improvements ranging from 8.38\% to 8.70\%. On Amazon-Books, PSG attains the highest NDCG@6, Precision@6, and F1@6, with a marginal improvement in Recall@6 over the best-performing baseline. We attribute the performance gains to three factors. First, the superior results of Generator-Evaluator methods over Generator-Only baselines highlight the importance of evaluator guidance in reranking. Second, PSG benefits from richer pair-level semantics. It jointly determines two consecutive positions as a single token, so the selection of the first position is implicitly conditioned on the second. This provides a one-step look-ahead that item-space generation structurally lacks. Third, reducing the generation horizon from $L$ to $L/2$ directly suppresses error compounding.

\subsubsection{Inference Efficiency}
\label{Inference Efficiency}
\begin{figure}[t]
\centering
\includegraphics[width=\linewidth]{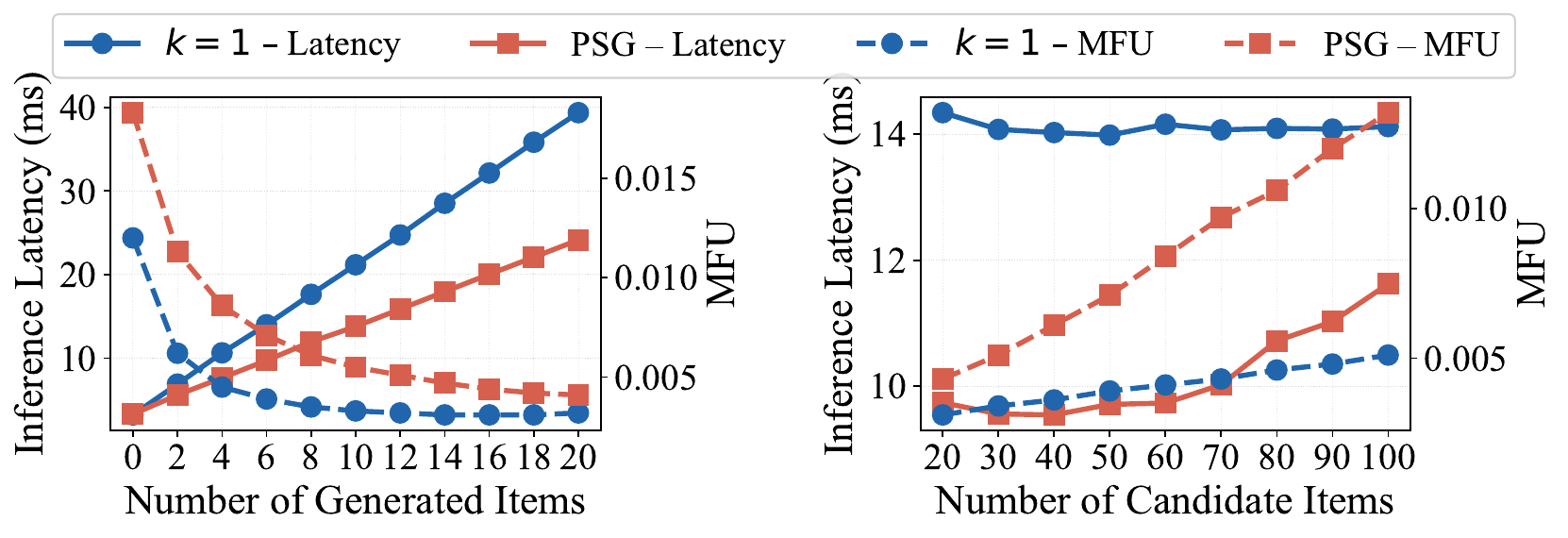}
\caption{Inference latency and hardware utilization of PSG (k = 2) vs. item-space (k = 1).}
\label{fig:latency}
\end{figure}

To better evaluate the inference efficiency advantage of PSG, we additionally train a $k=1$ variant (where each token consists of a single item) and compare it with PSG ($k=2$) in Figure~\ref{fig:latency}. We measure both \textit{Inference Latency} and \textit{MFU} for the two models. All inference experiments are conducted on an
GPU with FP32 precision, which provides a peak computational throughput of 59.8 TFLOPS. The left subfigure studies the impact of the number of generated items under the default setting of 50 candidate items. As expected, the inference latency of both methods increases almost linearly with the number of generated items. PSG consistently achieves lower latency, with the $k=1$ variant being roughly 1.5$\times$ slower on average, while PSG also attains higher MFU (Model FLOPs Utilization) throughout. The right subfigure examines the effect of candidate pool size while fixing the number of generated items to 6. The $k=1$ variant is relatively insensitive to the number of candidates, whereas PSG exhibits a mild increase in latency as the candidate pool size grows, due to the expansion of the pair-token vocabulary. Nevertheless, PSG still maintains a clear latency advantage, with the $k=1$ variant remaining around 1.5$\times$ slower overall. In addition, PSG consistently achieves higher MFU than the $k=1$ variant under different candidate sizes, indicating better hardware utilization during decoding. In Table \ref{tab:ptr_is_negible}, we empirically validate that the time cost of the Pair-Token Representation (PTR) module remains negligible even when executed serially with the Transformer encoder–decoder, accounting for only about 3\% of the total inference latency and thus exerting no measurable impact on the overall runtime.

\begin{table}[t]
\centering
\caption{Pair-Token Representation
(PTR) Overhead Ratio in Total Inference Latency (6 Generated Items)}
\label{tab:pair_overhead}
\resizebox{\columnwidth}{!}{
\begin{tabular}{cccc}
\toprule
Candidate Size $n$ & PTR (ms) & Inference Latency(ms) & Overhead Ratio \\
\midrule
20  & 0.332 & 9.74  & 3.4\% \\
50  & 0.335 & 9.71  & 3.4\% \\
80  & 0.337 & 10.72 & 3.1\% \\
100 & 0.329 & 11.62 & 2.8\% \\
\bottomrule
\end{tabular}
}
\label{tab:ptr_is_negible}
\end{table}

\begin{table}[t]
\centering
\caption{Ablation study of PSG on ML-1M and Amazon-Books.
N@6, P@6, R@6 and F1@6 denote NDCG@6, Precision@6, Recall@6 and F1@6, respectively.}
\label{tab:ablation}
\resizebox{\columnwidth}{!}{
\begin{tabular}{lcccccccc}
\toprule
\multirow{2}{*}{Variant}
& \multicolumn{4}{c}{ML-1M}
& \multicolumn{4}{c}{Amazon-Books} \\
\cmidrule(lr){2-5}
\cmidrule(lr){6-9}
& N@6 & P@6 & R@6 & F1@6
& N@6 & P@6 & R@6 & F1@6 \\
\midrule

PSG
& \textbf{0.7548} & \textbf{0.6431} & \textbf{0.7733} & \textbf{0.7022}
& \textbf{0.8652} & \textbf{0.7725} & \textbf{0.8345} & \textbf{0.8023} \\

w/o Pair Pretrain
& 0.7463 & 0.6329 & 0.7616 & 0.6913
& 0.8527 & 0.7586 & 0.8194 & 0.7878 \\

w/o NTP
& 0.6452 & 0.5243 & 0.6363 & 0.5749
& 0.7595 & 0.6428 & 0.7022 & 0.6712 \\

w/o GRPO
& 0.7232 & 0.6135 & 0.7381 & 0.6701
& 0.8175 & 0.6816 & 0.8132 & 0.7416 \\

\bottomrule
\end{tabular}
}
\label{tab:ablation_study}
\end{table}

\subsubsection{Ablation study}

Our method has three training stages: pair-token pretraining, next-token prediction (NTP), and exploration by reinforcement learning. 
We evaluate their contributions by removing each corresponding objective, demonstrated in Table \ref{tab:ablation_study}.

The ablation results show that all three objectives are beneficial. 
Removing NTP causes the largest performance degradation. This indicates that token-level autoregressive supervision is essential for learning stable sequential item generation.
Pair-token pretraining improves performance by capturing local item co-occurrence patterns as item-level relation priors, which provides better initialization for subsequent autoregressive training.
Removing GRPO also degrades performance, indicating its importance in policy refinement after supervised training. By leveraging group-relative feedback, GRPO further improves the quality of generated recommendation slates beyond sequence imitation.

\subsection{Online A/B test}
\label{sec:online_ab_test}
PSG is additionally deployed in the reranking stage of the single-column recommendation feed on the Kuaishou main app, which serves over 400 million daily active users with an average daily time spent exceeding two hours per user, providing a large-scale and latency-sensitive production environment. The online baseline is GoalRank\cite{zhang2026goalrank}, which takes $60$ candidate items and a user behavior sequence of length $100$ as input, and generates $50$ candidate sequences each of length $L=6$. PSG and GoalRank share all environment settings, including the online evaluator; as the evaluator design is not the focus of this work, its details are omitted. Two disjoint 10\% traffic buckets are allocated for seven days during the online test. Following the platform's latency budget, 50 ms is assigned to the G-E paradigm, with 30 ms for the generator to produce candidate sequences and 20 ms for the evaluator to select the optimal one. The primary online metrics are the Stay Time (ST) per user on the platform. Additionally, latency and online QPS (queries per second) are reported, both measured on the same Cloud Container equipped with 120 CPU threads and 500 GB of memory. These metrics collectively reflect the online efficiency of the system. Results are exhibited in Table \ref{tab:online_result}.
\begin{table}[t]
\centering
\caption{Online comparison GoalRank (baseline) VS PSG.}
\label{tab:online_result}
\begin{tabular}{lccc}
\toprule
Metric & GoalRank & PSG & Improvement \\
\midrule
ST (relative lift) & -- & 0.178\% & -- \\
QPS per cloud container & 734 & 1320 & +79.8\% \\
Generator Latency (ms) & 38.42 & 20.99 & 1.83$\times$ speedup \\
\bottomrule
\end{tabular}
\label{tab:online_result}
\end{table}

Such a relative 0.178\% lift in stay time is particularly significant, as on a web-scale platform like Kuaishou, a 0.1\% improvement is typically the threshold for rolling out a new online model to full traffic. Notably, PSG can handle nearly 80\% additional traffic under the same container configuration, implying substantial savings in computational budget and thus significant economic benefits for the company in supporting full online service. However, the online environment is inherently complex. The latency budget is not solely consumed by model execution; a considerable portion is allocated to request parsing, workflow preparation, logging, response construction, and other overheads. Once the total cost exceeds the generator budget (30 ms), the system terminates the request process. To avoid degrading user experience during testing, we evaluate latency on dry-run traffic, which replicates real online requests but does not affect actual user exposure. Under the same QPS (e.g. 1320 QPS) load on a single cloud container, PSG achieves a $1.83\times$ speedup over GoalRank and successfully handles all incoming requests.

\begin{table}[t]
\centering
\caption{Average inference latency (ms) over 6 generated items 
under different numbers of items per token ($k$) and candidate size ($n$).  
$k\!=\!3$ runs out of memory at $n\!=\!100$ on a 48GB GPU.}
\label{tab:granularity}
\resizebox{\columnwidth}{!}{
\begin{tabular}{ccccccccccc}
\toprule
\multirow{2}{*}{$k$} & \multicolumn{9}{c}{Candidate Size ($n$)} \\
\cmidrule(lr){2-10}
& 20 & 30 & 40 & 50 & 60 & 70 & 80 & 90 & 100 \\
\midrule
1 & 25.3 & 24.9 & 24.8 & 24.9 & 25.0 & 25.0 & 25.1 & 25.0 & 25.0 \\
2 & 15.7 & \textbf{15.5} & \textbf{15.5} & \textbf{15.9} & \textbf{15.9} & \textbf{16.3} & \textbf{17.1} & \textbf{17.8} & \textbf{18.7} \\
3 & \textbf{14.0} & 20.7 & 34.6 & 62.7 & 106.5 & 164.8 & 241.3 & 340.7 & OOM \\
\bottomrule
\end{tabular}
}
\label{tab:k_2_is_reasonable}
\end{table}

\subsection{Pair-Token ($k=2$) is reasonable}
\label{sec:k_choose}
We further examine the $k=3$ variant under the same beam size setting (beam size $=4$) and evaluate its inference latency in Table~\ref{tab:k_2_is_reasonable}. When the candidate set size is relatively small (e.g., $n=20$), the inference time is further reduced due to the shorter decoding sequence. However, as the candidate pool size increases, the computational cost of processing the vocabulary grows rapidly, the per-step vocabulary size expands to $n^3$, and this overhead eventually outweighs the latency reduction brought by fewer decoding steps. These results demonstrate that pair-token modeling with $k=2$ strikes the best balance between decoding efficiency and computational tractability. This is particularly critical for our online environment ($n=60$, CPU-based inference on Cloud Container), where $k=3$ would push the generator beyond the $30$ms budget.

\section{Conclusion}
\label{sec:conclusion}
We presented Pair-Space Generation (PSG), which replaces item-level autoregressive decoding with pair-level decoding for RL-based reranking. PSG is provably as expressive as item-space generation, halves the decoding horizon to yield a near-$2\times$ speedup (empirically $1.83\times$), and reduces the worst-case suboptimality bound by $4\times$ under outcome-only rewards, all confirmed by online A/B tests in a production system serving hundreds of millions of daily active users. A pretrained pair-token representation module with order-aware objectives addresses the cold-start challenge that would otherwise plague an $n^2$-scale vocabulary. Extending PSG to triple-space is theoretically feasible, yet as our analysis shows, $k=2$ remains the industrial sweet spot; tighter instance-dependent error bounds and stochastic transition extensions are left for future work.


\bibliographystyle{ACM-Reference-Format}
\bibliography{references.bib}

@inproceedings{DBLP:conf/kdd/Hou0SJSSHYM25,
  author       = {Yupeng Hou and
                  Jiacheng Li and
                  Ashley Shin and
                  Jinsung Jeon and
                  Abhishek Santhanam and
                  Wei Shao and
                  Kaveh Hassani and
                  Ning Yao and
                  Julian J. McAuley},
  editor       = {Luiza Antonie and
                  Jian Pei and
                  Xiaohui Yu and
                  Flavio Chierichetti and
                  Hady W. Lauw and
                  Yizhou Sun and
                  Srinivasan Parthasarathy},
  title        = {Generating Long Semantic IDs in Parallel for Recommendation},
  booktitle    = {Proceedings of the 31st {ACM} {SIGKDD} Conference on Knowledge Discovery
                  and Data Mining, V.2, {KDD} 2025, Toronto ON, Canada, August 3-7,
                  2025},
  pages        = {956--966},
  publisher    = {{ACM}},
  year         = {2025},
  url          = {https://doi.org/10.1145/3711896.3736979},
  doi          = {10.1145/3711896.3736979},
  bibsource    = {dblp computer science bibliography, https://dblp.org}
}

@inproceedings{DBLP:conf/cikm/Li0CD0HLZ25,
  author       = {Wuchao Li and
                  Shiyao Wang and
                  Kuo Cai and
                  Jiaxin Deng and
                  Xingmei Wang and
                  Qigen Hu and
                  Defu Lian and
                  Guorui Zhou},
  editor       = {Meeyoung Cha and
                  Chanyoung Park and
                  Noseong Park and
                  Carl Yang and
                  Senjuti Basu Roy and
                  Jessie Li and
                  Jaap Kamps and
                  Kijung Shin and
                  Bryan Hooi and
                  Lifang He},
  title        = {Taming Ultra-Long Behavior Sequence in Session-wise Generative Recommendation},
  booktitle    = {Proceedings of the 34th {ACM} International Conference on Information
                  and Knowledge Management, {CIKM} 2025, Seoul, Republic of Korea, November
                  10-14, 2025},
  pages        = {5839--5846},
  publisher    = {{ACM}},
  year         = {2025},
  url          = {https://doi.org/10.1145/3746252.3761564},
  doi          = {10.1145/3746252.3761564},
  bibsource    = {dblp computer science bibliography, https://dblp.org}
}

@inproceedings{DBLP:conf/cikm/LinLDBLYZ025,
  author       = {Zhijie Lin and
                  Zhuofeng Li and
                  Chenglei Dai and
                  Wentian Bao and
                  Shuai Lin and
                  Enyun Yu and
                  Haoxiang Zhang and
                  Liang Zhao},
  editor       = {Meeyoung Cha and
                  Chanyoung Park and
                  Noseong Park and
                  Carl Yang and
                  Senjuti Basu Roy and
                  Jessie Li and
                  Jaap Kamps and
                  Kijung Shin and
                  Bryan Hooi and
                  Lifang He},
  title        = {GReF: {A} Unified Generative Framework for Efficient Reranking via
                  Ordered Multi-token Prediction},
  booktitle    = {Proceedings of the 34th {ACM} International Conference on Information
                  and Knowledge Management, {CIKM} 2025, Seoul, Republic of Korea, November
                  10-14, 2025},
  pages        = {5879--5887},
  publisher    = {{ACM}},
  year         = {2025},
  url          = {https://doi.org/10.1145/3746252.3761540},
  doi          = {10.1145/3746252.3761540},
  bibsource    = {dblp computer science bibliography, https://dblp.org}
}

@article{DBLP:journals/corr/abs-2603-02999,
  author       = {Dekai Sun and
                  Yiming Liu and
                  Jiafan Zhou and
                  Xun Liu and
                  Chenchen Yu and
                  Yi Li and
                  Jun Zhang and
                  Huan Yu and
                  Jie Jiang},
  title        = {OneRanker: Unified Generation and Ranking with One Model in Industrial
                  Advertising Recommendation},
  journal      = {CoRR},
  volume       = {abs/2603.02999},
  year         = {2026},
  url          = {https://doi.org/10.48550/arXiv.2603.02999},
  doi          = {10.48550/ARXIV.2603.02999},
  eprinttype   = {arXiv},
  eprint       = {2603.02999},
  bibsource    = {dblp computer science bibliography, https://dblp.org}
}

@article{DBLP:journals/bstj/Shannon48,
  author       = {Claude E. Shannon},
  title        = {A mathematical theory of communication},
  journal      = {Bell Syst. Tech. J.},
  volume       = {27},
  number       = {3},
  pages        = {379--423},
  year         = {1948},
  url          = {https://doi.org/10.1002/j.1538-7305.1948.tb01338.x},
  doi          = {10.1002/J.1538-7305.1948.TB01338.X},
  bibsource    = {dblp computer science bibliography, https://dblp.org}
}

@inproceedings{DBLP:conf/naacl/OuCT24,
  author       = {Jie Ou and
                  Yueming Chen and
                  Wenhong Tian},
  editor       = {Yi Yang and
                  Aida Mostafazadeh Davani and
                  Avi Sil and
                  Anoop Kumar},
  title        = {Lossless Acceleration of Large Language Model via Adaptive N-gram
                  Parallel Decoding},
  booktitle    = {Proceedings of the 2024 Conference of the North American Chapter of
                  the Association for Computational Linguistics: Human Language Technologies:
                  Industry Track, {NAACL} 2024, Mexico City, Mexico, June 16-21, 2024},
  pages        = {10--22},
  publisher    = {Association for Computational Linguistics},
  year         = {2024},
  url          = {https://doi.org/10.18653/v1/2024.naacl-industry.2},
  doi          = {10.18653/V1/2024.NAACL-INDUSTRY.2},
  bibsource    = {dblp computer science bibliography, https://dblp.org}
}

@inproceedings{DBLP:conf/emnlp/ChenLLQGAZW25,
  author       = {Jinglin Chen and
                  Qiwei Li and
                  Zuchao Li and
                  Baoyuan Qi and
                  Guoming Liu and
                  Haojun Ai and
                  Hai Zhao and
                  Ping Wang},
  editor       = {Christos Christodoulopoulos and
                  Tanmoy Chakraborty and
                  Carolyn Rose and
                  Violet Peng},
  title        = {Faster In-Context Learning for LLMs via N-Gram Trie Speculative Decoding},
  booktitle    = {Proceedings of the 2025 Conference on Empirical Methods in Natural
                  Language Processing, {EMNLP} 2025, Suzhou, China, November 4-9, 2025},
  pages        = {18040--18051},
  publisher    = {Association for Computational Linguistics},
  year         = {2025},
  url          = {https://doi.org/10.18653/v1/2025.emnlp-main.911},
  doi          = {10.18653/V1/2025.EMNLP-MAIN.911},
  bibsource    = {dblp computer science bibliography, https://dblp.org}
}

@inproceedings{DBLP:conf/emnlp/Xia0WCWS23,
  author       = {Heming Xia and
                  Tao Ge and
                  Peiyi Wang and
                  Si{-}Qing Chen and
                  Furu Wei and
                  Zhifang Sui},
  editor       = {Houda Bouamor and
                  Juan Pino and
                  Kalika Bali},
  title        = {Speculative Decoding: Exploiting Speculative Execution for Accelerating
                  Seq2seq Generation},
  booktitle    = {Findings of the Association for Computational Linguistics: {EMNLP}
                  2023, Singapore, December 6-10, 2023},
  series       = {Findings of {ACL}},
  volume       = {{EMNLP} 2023},
  pages        = {3909--3925},
  publisher    = {Association for Computational Linguistics},
  year         = {2023},
  url          = {https://doi.org/10.18653/v1/2023.findings-emnlp.257},
  doi          = {10.18653/V1/2023.FINDINGS-EMNLP.257},
  bibsource    = {dblp computer science bibliography, https://dblp.org}
}

@inproceedings{DBLP:conf/icml/LeviathanKM23,
  author       = {Yaniv Leviathan and
                  Matan Kalman and
                  Yossi Matias},
  editor       = {Andreas Krause and
                  Emma Brunskill and
                  Kyunghyun Cho and
                  Barbara Engelhardt and
                  Sivan Sabato and
                  Jonathan Scarlett},
  title        = {Fast Inference from Transformers via Speculative Decoding},
  booktitle    = {International Conference on Machine Learning, {ICML} 2023, 23-29 July
                  2023, Honolulu, Hawaii, {USA}},
  series       = {Proceedings of Machine Learning Research},
  volume       = {202},
  pages        = {19274--19286},
  publisher    = {{PMLR}},
  year         = {2023},
  url          = {https://proceedings.mlr.press/v202/leviathan23a.html},
  bibsource    = {dblp computer science bibliography, https://dblp.org}
}

@article{DBLP:journals/corr/abs-2302-01318,
  author       = {Charlie Chen and
                  Sebastian Borgeaud and
                  Geoffrey Irving and
                  Jean{-}Baptiste Lespiau and
                  Laurent Sifre and
                  John Jumper},
  title        = {Accelerating Large Language Model Decoding with Speculative Sampling},
  journal      = {CoRR},
  volume       = {abs/2302.01318},
  year         = {2023},
  url          = {https://doi.org/10.48550/arXiv.2302.01318},
  doi          = {10.48550/ARXIV.2302.01318},
  eprinttype   = {arXiv},
  eprint       = {2302.01318},
  bibsource    = {dblp computer science bibliography, https://dblp.org}
}

@inproceedings{DBLP:conf/sosp/KwonLZ0ZY0ZS23,
  author       = {Woosuk Kwon and
                  Zhuohan Li and
                  Siyuan Zhuang and
                  Ying Sheng and
                  Lianmin Zheng and
                  Cody Hao Yu and
                  Joseph Gonzalez and
                  Hao Zhang and
                  Ion Stoica},
  editor       = {Jason Flinn and
                  Margo I. Seltzer and
                  Peter Druschel and
                  Antoine Kaufmann and
                  Jonathan Mace},
  title        = {Efficient Memory Management for Large Language Model Serving with
                  PagedAttention},
  booktitle    = {Proceedings of the 29th Symposium on Operating Systems Principles,
                  {SOSP} 2023, Koblenz, Germany, October 23-26, 2023},
  pages        = {611--626},
  publisher    = {{ACM}},
  year         = {2023},
  url          = {https://doi.org/10.1145/3600006.3613165},
  doi          = {10.1145/3600006.3613165},
  bibsource    = {dblp computer science bibliography, https://dblp.org}
}

@inproceedings{DBLP:conf/emnlp/AinslieLJZLS23,
  author       = {Joshua Ainslie and
                  James Lee{-}Thorp and
                  Michiel de Jong and
                  Yury Zemlyanskiy and
                  Federico Lebr{\'{o}}n and
                  Sumit Sanghai},
  editor       = {Houda Bouamor and
                  Juan Pino and
                  Kalika Bali},
  title        = {{GQA:} Training Generalized Multi-Query Transformer Models from Multi-Head
                  Checkpoints},
  booktitle    = {Proceedings of the 2023 Conference on Empirical Methods in Natural
                  Language Processing, {EMNLP} 2023, Singapore, December 6-10, 2023},
  pages        = {4895--4901},
  publisher    = {Association for Computational Linguistics},
  year         = {2023},
  url          = {https://doi.org/10.18653/v1/2023.emnlp-main.298},
  doi          = {10.18653/V1/2023.EMNLP-MAIN.298},
  bibsource    = {dblp computer science bibliography, https://dblp.org}
}

@inproceedings{DBLP:conf/iclr/Dao24,
  author       = {Tri Dao},
  title        = {FlashAttention-2: Faster Attention with Better Parallelism and Work
                  Partitioning},
  booktitle    = {The Twelfth International Conference on Learning Representations,
                  {ICLR} 2024, Vienna, Austria, May 7-11, 2024},
  publisher    = {OpenReview.net},
  year         = {2024},
  url          = {https://openreview.net/forum?id=mZn2Xyh9Ec},
  bibsource    = {dblp computer science bibliography, https://dblp.org}
}

@inproceedings{DBLP:conf/nips/DaoFERR22,
  author       = {Tri Dao and
                  Daniel Y. Fu and
                  Stefano Ermon and
                  Atri Rudra and
                  Christopher R{\'{e}}},
  editor       = {Sanmi Koyejo and
                  S. Mohamed and
                  A. Agarwal and
                  Danielle Belgrave and
                  K. Cho and
                  A. Oh},
  title        = {FlashAttention: Fast and Memory-Efficient Exact Attention with IO-Awareness},
  booktitle    = {Advances in Neural Information Processing Systems 35: Annual Conference
                  on Neural Information Processing Systems 2022, NeurIPS 2022, New Orleans,
                  LA, USA, November 28 - December 9, 2022},
  year         = {2022},
  url          = {http://papers.nips.cc/paper\_files/paper/2022/hash/67d57c32e20fd0a7a302cb81d36e40d5-Abstract-Conference.html},
  bibsource    = {dblp computer science bibliography, https://dblp.org}
}

@article{DBLP:journals/corr/abs-2508-20900,
  author       = {Guorui Zhou and
                  Hengrui Hu and
                  Hongtao Cheng and
                  Huanjie Wang and
                  Jiaxin Deng and
                  Jinghao Zhang and
                  Kuo Cai and
                  Lejian Ren and
                  Lu Ren and
                  Liao Yu and
                  Pengfei Zheng and
                  Qiang Luo and
                  Qianqian Wang and
                  Qigen Hu and
                  Rui Huang and
                  Ruiming Tang and
                  Shiyao Wang and
                  Shujie Yang and
                  Tao Wu and
                  Wuchao Li and
                  Xinchen Luo and
                  Xingmei Wang and
                  Yi Su and
                  Yunfan Wu and
                  Zexuan Cheng and
                  Zhanyu Liu and
                  Zixing Zhang and
                  Bin Zhang and
                  Boxuan Wang and
                  Chaoyi Ma and
                  Chengru Song and
                  Chenhui Wang and
                  Chenglong Chu and
                  Di Wang and
                  Dongxue Meng and
                  Dunju Zang and
                  Fan Yang and
                  Fangyu Zhang and
                  Feng Jiang and
                  Fuxing Zhang and
                  Gang Wang and
                  Guowang Zhang and
                  Han Li and
                  Honghui Bao and
                  Hongyang Cao and
                  Jiaming Huang and
                  Jiapeng Chen and
                  Jiaqiang Liu and
                  Jinghui Jia and
                  Kun Gai and
                  Lantao Hu and
                  Liang Zeng and
                  Qiang Wang and
                  Qidong Zhou and
                  Rongzhou Zhang and
                  Shengzhe Wang and
                  Shihui He and
                  Shuang Yang and
                  Siyang Mao and
                  Sui Huang and
                  Tiantian He and
                  Tingting Gao and
                  Wei Yuan and
                  Xiao Liang and
                  Xiaoxiao Xu and
                  Xugang Liu and
                  Yan Wang and
                  Yang Zhou and
                  Yi Wang and
                  Yiwu Liu and
                  Yue Song and
                  Yufei Zhang and
                  Yunfeng Zhao and
                  Zhixin Ling and
                  Ziming Li},
  title        = {OneRec-V2 Technical Report},
  journal      = {CoRR},
  volume       = {abs/2508.20900},
  year         = {2025},
  url          = {https://doi.org/10.48550/arXiv.2508.20900},
  doi          = {10.48550/ARXIV.2508.20900},
  eprinttype   = {arXiv},
  eprint       = {2508.20900},
  bibsource    = {dblp computer science bibliography, https://dblp.org}
}

@inproceedings{DBLP:conf/www/WangZLLLWZLSWXZ26,
  author       = {Yejing Wang and
                  Shengyu Zhou and
                  Jinyu Lu and
                  Ziwei Liu and
                  Langming Liu and
                  Maolin Wang and
                  Wenlin Zhang and
                  Feng Li and
                  Wenbo Su and
                  Pengjie Wang and
                  Jian Xu and
                  Xiangyu Zhao},
  editor       = {Hakim Hacid and
                  Yoelle Maarek and
                  Francesco Bonchi and
                  Ido Guy and
                  Emine Yilmaz},
  title        = {{NEZHA:} {A} Zero-sacrifice and Hyperspeed Decoding Architecture for
                  Generative Recommendations},
  booktitle    = {Proceedings of the {ACM} Web Conference 2026, {WWW} 2026, Dubai, United
                  Arab Emirates, originally scheduled for April 13-17, 2026, rescheduled
                  for June 29 - July 3, 2026},
  pages        = {8073--8082},
  publisher    = {{ACM}},
  year         = {2026},
  url          = {https://doi.org/10.1145/3774904.3792797},
  doi          = {10.1145/3774904.3792797},
  bibsource    = {dblp computer science bibliography, https://dblp.org}
}

@inproceedings{DBLP:conf/icml/ZhaiLLWLCGGGHLS24,
  author       = {Jiaqi Zhai and
                  Lucy Liao and
                  Xing Liu and
                  Yueming Wang and
                  Rui Li and
                  Xuan Cao and
                  Leon Gao and
                  Zhaojie Gong and
                  Fangda Gu and
                  Jiayuan He and
                  Yinghai Lu and
                  Yu Shi},
  editor       = {Ruslan Salakhutdinov and
                  Zico Kolter and
                  Katherine A. Heller and
                  Adrian Weller and
                  Nuria Oliver and
                  Jonathan Scarlett and
                  Felix Berkenkamp},
  title        = {Actions Speak Louder than Words: Trillion-Parameter Sequential Transducers
                  for Generative Recommendations},
  booktitle    = {Forty-first International Conference on Machine Learning, {ICML} 2024,
                  Vienna, Austria, July 21-27, 2024},
  series       = {Proceedings of Machine Learning Research},
  volume       = {235},
  pages        = {58484--58509},
  publisher    = {{PMLR} / OpenReview.net},
  year         = {2024},
  url          = {https://proceedings.mlr.press/v235/zhai24a.html},
  bibsource    = {dblp computer science bibliography, https://dblp.org}
}

@inproceedings{DBLP:conf/nips/FengLLLC22,
  author       = {Chao Feng and
                  Wuchao Li and
                  Defu Lian and
                  Zheng Liu and
                  Enhong Chen},
  editor       = {Sanmi Koyejo and
                  S. Mohamed and
                  A. Agarwal and
                  Danielle Belgrave and
                  K. Cho and
                  A. Oh},
  title        = {Recommender Forest for Efficient Retrieval},
  booktitle    = {Advances in Neural Information Processing Systems 35: Annual Conference
                  on Neural Information Processing Systems 2022, NeurIPS 2022, New Orleans,
                  LA, USA, November 28 - December 9, 2022},
  year         = {2022},
  url          = {http://papers.nips.cc/paper\_files/paper/2022/hash/fe2fe749d329627f161484876630c689-Abstract-Conference.html},
  bibsource    = {dblp computer science bibliography, https://dblp.org}
}

@inproceedings{DBLP:conf/nips/Tay00NBM000GSCM22,
  author       = {Yi Tay and
                  Vinh Tran and
                  Mostafa Dehghani and
                  Jianmo Ni and
                  Dara Bahri and
                  Harsh Mehta and
                  Zhen Qin and
                  Kai Hui and
                  Zhe Zhao and
                  Jai Prakash Gupta and
                  Tal Schuster and
                  William W. Cohen and
                  Donald Metzler},
  editor       = {Sanmi Koyejo and
                  S. Mohamed and
                  A. Agarwal and
                  Danielle Belgrave and
                  K. Cho and
                  A. Oh},
  title        = {Transformer Memory as a Differentiable Search Index},
  booktitle    = {Advances in Neural Information Processing Systems 35: Annual Conference
                  on Neural Information Processing Systems 2022, NeurIPS 2022, New Orleans,
                  LA, USA, November 28 - December 9, 2022},
  year         = {2022},
  url          = {http://papers.nips.cc/paper\_files/paper/2022/hash/892840a6123b5ec99ebaab8be1530fba-Abstract-Conference.html},
  bibsource    = {dblp computer science bibliography, https://dblp.org}
}

@inproceedings{chen2023understanding,
  title={Understanding differential search index for text retrieval},
  author={Chen, Xiaoyang and Liu, Yanjiang and He, Ben and Sun, Le and Sun, Yingfei},
  booktitle={Findings of the association for computational linguistics: ACL 2023},
  pages={10701--10717},
  year={2023}
}

@inproceedings{3295222.3295349,
author = {Vaswani, Ashish and Shazeer, Noam and Parmar, Niki and Uszkoreit, Jakob and Jones, Llion and Gomez, Aidan N. and Kaiser, \L{}ukasz and Polosukhin, Illia},
title = {Attention is all you need},
year = {2017},
isbn = {9781510860964},
publisher = {Curran Associates Inc.},
address = {Red Hook, NY, USA},
booktitle = {Proceedings of the 31st International Conference on Neural Information Processing Systems},
pages = {6000–6010},
numpages = {11},
location = {Long Beach, California, USA},
series = {NIPS'17}
}

@inproceedings{DBLP:conf/nips/RajputMSKVHHT0S23,
  author       = {Shashank Rajput and
                  Nikhil Mehta and
                  Anima Singh and
                  Raghunandan Hulikal Keshavan and
                  Trung Vu and
                  Lukasz Heldt and
                  Lichan Hong and
                  Yi Tay and
                  Vinh Q. Tran and
                  Jonah Samost and
                  Maciej Kula and
                  Ed H. Chi and
                  Mahesh Sathiamoorthy},
  editor       = {Alice Oh and
                  Tristan Naumann and
                  Amir Globerson and
                  Kate Saenko and
                  Moritz Hardt and
                  Sergey Levine},
  title        = {Recommender Systems with Generative Retrieval},
  booktitle    = {Advances in Neural Information Processing Systems 36: Annual Conference
                  on Neural Information Processing Systems 2023, NeurIPS 2023, New Orleans,
                  LA, USA, December 10 - 16, 2023},
  year         = {2023},
  url          = {http://papers.nips.cc/paper\_files/paper/2023/hash/20dcab0f14046a5c6b02b61da9f13229-Abstract-Conference.html},
  bibsource    = {dblp computer science bibliography, https://dblp.org}
}

@article{DBLP:journals/corr/abs-2512-24762,
  author       = {OneRec Team},
  title        = {OpenOneRec Technical Report},
  journal      = {CoRR},
  volume       = {abs/2512.24762},
  year         = {2025},
  url          = {https://doi.org/10.48550/arXiv.2512.24762},
  doi          = {10.48550/ARXIV.2512.24762},
  eprinttype   = {arXiv},
  eprint       = {2512.24762},
  bibsource    = {dblp computer science bibliography, https://dblp.org}
}

@article{DBLP:journals/corr/abs-2402-03300,
  author       = {Zhihong Shao and
                  Peiyi Wang and
                  Qihao Zhu and
                  Runxin Xu and
                  Junxiao Song and
                  Mingchuan Zhang and
                  Y. K. Li and
                  Y. Wu and
                  Daya Guo},
  title        = {DeepSeekMath: Pushing the Limits of Mathematical Reasoning in Open
                  Language Models},
  journal      = {CoRR},
  volume       = {abs/2402.03300},
  year         = {2024},
  url          = {https://doi.org/10.48550/arXiv.2402.03300},
  doi          = {10.48550/ARXIV.2402.03300},
  eprinttype   = {arXiv},
  eprint       = {2402.03300},
  bibsource    = {dblp computer science bibliography, https://dblp.org}
}

@inproceedings{DBLP:conf/sigir/AiBGC18,
  author       = {Qingyao Ai and
                  Keping Bi and
                  Jiafeng Guo and
                  W. Bruce Croft},
  editor       = {Kevyn Collins{-}Thompson and
                  Qiaozhu Mei and
                  Brian D. Davison and
                  Yiqun Liu and
                  Emine Yilmaz},
  title        = {Learning a Deep Listwise Context Model for Ranking Refinement},
  booktitle    = {The 41st International {ACM} {SIGIR} Conference on Research {\&}
                  Development in Information Retrieval, {SIGIR} 2018, Ann Arbor, MI,
                  USA, July 08-12, 2018},
  pages        = {135--144},
  publisher    = {{ACM}},
  year         = {2018},
  url          = {https://doi.org/10.1145/3209978.3209985},
  doi          = {10.1145/3209978.3209985},
  bibsource    = {dblp computer science bibliography, https://dblp.org}
}

@inproceedings{DBLP:conf/sigir/PangXALCW20,
  author       = {Liang Pang and
                  Jun Xu and
                  Qingyao Ai and
                  Yanyan Lan and
                  Xueqi Cheng and
                  Jirong Wen},
  editor       = {Jimmy X. Huang and
                  Yi Chang and
                  Xueqi Cheng and
                  Jaap Kamps and
                  Vanessa Murdock and
                  Ji{-}Rong Wen and
                  Yiqun Liu},
  title        = {SetRank: Learning a Permutation-Invariant Ranking Model for Information
                  Retrieval},
  booktitle    = {Proceedings of the 43rd International {ACM} {SIGIR} conference on
                  research and development in Information Retrieval, {SIGIR} 2020, Virtual
                  Event, China, July 25-30, 2020},
  pages        = {499--508},
  publisher    = {{ACM}},
  year         = {2020},
  url          = {https://doi.org/10.1145/3397271.3401104},
  doi          = {10.1145/3397271.3401104},
  bibsource    = {dblp computer science bibliography, https://dblp.org}
}

@inproceedings{DBLP:conf/www/LiZLSCZTXH22,
  author       = {Yi Li and
                  Jieming Zhu and
                  Weiwen Liu and
                  Liangcai Su and
                  Guohao Cai and
                  Qi Zhang and
                  Ruiming Tang and
                  Xi Xiao and
                  Xiuqiang He},
  editor       = {Fr{\'{e}}d{\'{e}}rique Laforest and
                  Rapha{\"{e}}l Troncy and
                  Elena Simperl and
                  Deepak Agarwal and
                  Aristides Gionis and
                  Ivan Herman and
                  Lionel M{\'{e}}dini},
  title        = {{PEAR:} Personalized Re-ranking with Contextualized Transformer for
                  Recommendation},
  booktitle    = {Companion of The Web Conference 2022, Virtual Event / Lyon, France,
                  April 25 - 29, 2022},
  pages        = {62--66},
  publisher    = {{ACM}},
  year         = {2022},
  url          = {https://doi.org/10.1145/3487553.3524208},
  doi          = {10.1145/3487553.3524208},
  bibsource    = {dblp computer science bibliography, https://dblp.org}
}

@inproceedings{DBLP:conf/sigir/XiLZZDTZZY22,
  author       = {Yunjia Xi and
                  Weiwen Liu and
                  Jieming Zhu and
                  Xilong Zhao and
                  Xinyi Dai and
                  Ruiming Tang and
                  Weinan Zhang and
                  Rui Zhang and
                  Yong Yu},
  editor       = {Enrique Amig{\'{o}} and
                  Pablo Castells and
                  Julio Gonzalo and
                  Ben Carterette and
                  J. Shane Culpepper and
                  Gabriella Kazai},
  title        = {Multi-Level Interaction Reranking with User Behavior History},
  booktitle    = {{SIGIR} '22: The 45th International {ACM} {SIGIR} Conference on Research
                  and Development in Information Retrieval, Madrid, Spain, July 11 -
                  15, 2022},
  pages        = {1336--1346},
  publisher    = {{ACM}},
  year         = {2022},
  url          = {https://doi.org/10.1145/3477495.3532026},
  doi          = {10.1145/3477495.3532026},
  bibsource    = {dblp computer science bibliography, https://dblp.org}
}

@article{DBLP:journals/corr/abs-1810-02019,
  author       = {Irwan Bello and
                  Sayali Kulkarni and
                  Sagar Jain and
                  Craig Boutilier and
                  Ed Huai{-}hsin Chi and
                  Elad Eban and
                  Xiyang Luo and
                  Alan Mackey and
                  Ofer Meshi},
  title        = {Seq2Slate: Re-ranking and Slate Optimization with RNNs},
  journal      = {CoRR},
  volume       = {abs/1810.02019},
  year         = {2018},
  url          = {http://arxiv.org/abs/1810.02019},
  eprinttype   = {arXiv},
  eprint       = {1810.02019},
  bibsource    = {dblp computer science bibliography, https://dblp.org}
}

@inproceedings{DBLP:conf/ijcai/ZhuangOW18,
  author       = {Tao Zhuang and
                  Wenwu Ou and
                  Zhirong Wang},
  editor       = {J{\'{e}}r{\^{o}}me Lang},
  title        = {Globally Optimized Mutual Influence Aware Ranking in E-Commerce Search},
  booktitle    = {Proceedings of the Twenty-Seventh International Joint Conference on
                  Artificial Intelligence, {IJCAI} 2018, July 13-19, 2018, Stockholm,
                  Sweden},
  pages        = {3725--3731},
  publisher    = {ijcai.org},
  year         = {2018},
  url          = {https://doi.org/10.24963/ijcai.2018/518},
  doi          = {10.24963/IJCAI.2018/518},
  bibsource    = {dblp computer science bibliography, https://dblp.org}
}

@inproceedings{DBLP:conf/iclr/JiangGQMR19,
  author       = {Ray Jiang and
                  Sven Gowal and
                  Yuqiu Qian and
                  Timothy A. Mann and
                  Danilo J. Rezende},
  title        = {Beyond Greedy Ranking: Slate Optimization via List-CVAE},
  booktitle    = {7th International Conference on Learning Representations, {ICLR} 2019,
                  New Orleans, LA, USA, May 6-9, 2019},
  publisher    = {OpenReview.net},
  year         = {2019},
  url          = {https://openreview.net/forum?id=r1xX42R5Fm},
  bibsource    = {dblp computer science bibliography, https://dblp.org}
}

@inproceedings{DBLP:conf/kdd/ShiYWWGLWWW23,
  author       = {Xiaowen Shi and
                  Fan Yang and
                  Ze Wang and
                  Xiaoxu Wu and
                  Muzhi Guan and
                  Guogang Liao and
                  Yongkang Wang and
                  Xingxing Wang and
                  Dong Wang},
  editor       = {Ambuj K. Singh and
                  Yizhou Sun and
                  Leman Akoglu and
                  Dimitrios Gunopulos and
                  Xifeng Yan and
                  Ravi Kumar and
                  Fatma Ozcan and
                  Jieping Ye},
  title        = {{PIER:} Permutation-Level Interest-Based End-to-End Re-ranking Framework
                  in E-commerce},
  booktitle    = {Proceedings of the 29th {ACM} {SIGKDD} Conference on Knowledge Discovery
                  and Data Mining, {KDD} 2023, Long Beach, CA, USA, August 6-10, 2023},
  pages        = {4823--4831},
  publisher    = {{ACM}},
  year         = {2023},
  url          = {https://doi.org/10.1145/3580305.3599886},
  doi          = {10.1145/3580305.3599886},
  bibsource    = {dblp computer science bibliography, https://dblp.org}
}

@article{DBLP:journals/corr/abs-2005-12206,
  author       = {Jianxiong Wei and
                  Anxiang Zeng and
                  Yueqiu Wu and
                  Peng Guo and
                  Qingsong Hua and
                  Qingpeng Cai},
  title        = {Generator and Critic: {A} Deep Reinforcement Learning Approach for
                  Slate Re-ranking in E-commerce},
  journal      = {CoRR},
  volume       = {abs/2005.12206},
  year         = {2020},
  url          = {https://arxiv.org/abs/2005.12206},
  eprinttype   = {arXiv},
  eprint       = {2005.12206},
  bibsource    = {dblp computer science bibliography, https://dblp.org}
}

@article{DBLP:journals/corr/abs-2102-12057,
  author       = {Yufei Feng and
                  Yu Gong and
                  Fei Sun and
                  Qingwen Liu and
                  Wenwu Ou},
  title        = {Revisit Recommender System in the Permutation Prospective},
  journal      = {CoRR},
  volume       = {abs/2102.12057},
  year         = {2021},
  url          = {https://arxiv.org/abs/2102.12057},
  eprinttype   = {arXiv},
  eprint       = {2102.12057},
  bibsource    = {dblp computer science bibliography, https://dblp.org}
}

@inproceedings{DBLP:conf/www/WangWKWCTZXW25,
  author       = {Shuli Wang and
                  Xue Wei and
                  Senjie Kou and
                  Chi Wang and
                  Wenshuai Chen and
                  Qi Tang and
                  Yinhua Zhu and
                  Xiong Xiao and
                  Xingxing Wang},
  editor       = {Guodong Long and
                  Michale Blumestein and
                  Yi Chang and
                  Liane Lewin{-}Eytan and
                  Zi Helen Huang and
                  Elad Yom{-}Tov},
  title        = {{NLGR:} Utilizing Neighbor Lists for Generative Rerank in Personalized
                  Recommendation Systems},
  booktitle    = {Companion Proceedings of the {ACM} on Web Conference 2025, {WWW} 2025,
                  Sydney, NSW, Australia, 28 April 2025 - 2 May 2025},
  pages        = {530--537},
  publisher    = {{ACM}},
  year         = {2025},
  url          = {https://doi.org/10.1145/3701716.3715251},
  doi          = {10.1145/3701716.3715251},
  bibsource    = {dblp computer science bibliography, https://dblp.org}
}

@inproceedings{10.1145/3637528.3671645,
author = {Ren, Yuxin and Yang, Qiya and Wu, Yichun and Xu, Wei and Wang, Yalong and Zhang, Zhiqiang},
title = {Non-autoregressive Generative Models for Reranking Recommendation},
year = {2024},
isbn = {9798400704901},
publisher = {Association for Computing Machinery},
address = {New York, NY, USA},
url = {https://doi.org/10.1145/3637528.3671645},
doi = {10.1145/3637528.3671645},
booktitle = {Proceedings of the 30th ACM SIGKDD Conference on Knowledge Discovery and Data Mining},
pages = {5625–5634},
numpages = {10},
location = {Barcelona, Spain},
series = {KDD '24}
}

@inproceedings{DBLP:conf/kdd/00060HSMZ0G23,
  author       = {Shuchang Liu and
                  Qingpeng Cai and
                  Zhankui He and
                  Bowen Sun and
                  Julian J. McAuley and
                  Dong Zheng and
                  Peng Jiang and
                  Kun Gai},
  editor       = {Ambuj K. Singh and
                  Yizhou Sun and
                  Leman Akoglu and
                  Dimitrios Gunopulos and
                  Xifeng Yan and
                  Ravi Kumar and
                  Fatma Ozcan and
                  Jieping Ye},
  title        = {Generative Flow Network for Listwise Recommendation},
  booktitle    = {Proceedings of the 29th {ACM} {SIGKDD} Conference on Knowledge Discovery
                  and Data Mining, {KDD} 2023, Long Beach, CA, USA, August 6-10, 2023},
  pages        = {1524--1534},
  publisher    = {{ACM}},
  year         = {2023},
  url          = {https://doi.org/10.1145/3580305.3599364},
  doi          = {10.1145/3580305.3599364},
  bibsource    = {dblp computer science bibliography, https://dblp.org}
}

@article{DBLP:journals/corr/abs-2104-00860,
  author       = {Yufei Feng and
                  Binbin Hu and
                  Yu Gong and
                  Fei Sun and
                  Qingwen Liu and
                  Wenwu Ou},
  title        = {{GRN:} Generative Rerank Network for Context-wise Recommendation},
  journal      = {CoRR},
  volume       = {abs/2104.00860},
  year         = {2021},
  url          = {https://arxiv.org/abs/2104.00860},
  eprinttype   = {arXiv},
  eprint       = {2104.00860},
  bibsource    = {dblp computer science bibliography, https://dblp.org}
}

@inproceedings{DBLP:conf/www/XuXCLSLWZZ26,
  author       = {Junwei Xu and
                  Zhibo Xiao and
                  Chuxin Chen and
                  Chengyu Lai and
                  Qijie Shen and
                  Jiuning Lin and
                  Dimin Wang and
                  Jialin Zhu and
                  Xiao{-}Ping Zhang},
  editor       = {Hakim Hacid and
                  Yoelle Maarek and
                  Francesco Bonchi and
                  Ido Guy and
                  Emine Yilmaz},
  title        = {OMGRec: One-time Matching-based Generative Rerank with Permutation-level
                  Modeling in E-commerce},
  booktitle    = {Proceedings of the {ACM} Web Conference 2026, {WWW} 2026, Dubai, United
                  Arab Emirates, originally scheduled for April 13-17, 2026, rescheduled
                  for June 29 - July 3, 2026},
  pages        = {8421--8424},
  publisher    = {{ACM}},
  year         = {2026},
  url          = {https://doi.org/10.1145/3774904.3792875},
  doi          = {10.1145/3774904.3792875},
  bibsource    = {dblp computer science bibliography, https://dblp.org}
}

@inproceedings{DBLP:conf/atal/ZhaoLFGZHCPD24,
  author       = {Xin Zhao and
                  Jiaxin Li and
                  Zhiwei Fang and
                  Yuchen Guo and
                  Jinyuan Zhao and
                  Jie He and
                  Wenlong Chen and
                  Changping Peng and
                  Guiguang Ding},
  editor       = {Mehdi Dastani and
                  Jaime Sim{\~{a}}o Sichman and
                  Natasha Alechina and
                  Virginia Dignum},
  title        = {JDRec: Practical Actor-Critic Framework for Online Combinatorial Recommender
                  System},
  booktitle    = {Proceedings of the 23rd International Conference on Autonomous Agents
                  and Multiagent Systems, {AAMAS} 2024, Auckland, New Zealand, May 6-10,
                  2024},
  pages        = {2612--2614},
  publisher    = {International Foundation for Autonomous Agents and Multiagent Systems
                  / {ACM}},
  year         = {2024},
  url          = {https://dl.acm.org/doi/10.5555/3635637.3663244},
  doi          = {10.5555/3635637.3663244},
  bibsource    = {dblp computer science bibliography, https://dblp.org}
}

@inproceedings{DBLP:conf/kdd/ZhuLCMSZZX0025,
  author       = {Ruitao Zhu and
                  Yangsu Liu and
                  Dagui Chen and
                  Zhenjia Ma and
                  Chufeng Shi and
                  Zhenzhe Zheng and
                  Jie Zhang and
                  Jian Xu and
                  Bo Zheng and
                  Fan Wu},
  editor       = {Yizhou Sun and
                  Flavio Chierichetti and
                  Hady W. Lauw and
                  Claudia Perlich and
                  Wee Hyong Tok and
                  Andrew Tomkins},
  title        = {Contextual Generative Auction with Permutation-level Externalities
                  for Online Advertising},
  booktitle    = {Proceedings of the 31st {ACM} {SIGKDD} Conference on Knowledge Discovery
                  and Data Mining, V.1, {KDD} 2025, Toronto, ON, Canada, August 3-7,
                  2025},
  pages        = {2171--2181},
  publisher    = {{ACM}},
  year         = {2025},
  url          = {https://doi.org/10.1145/3690624.3709313},
  doi          = {10.1145/3690624.3709313},
  bibsource    = {dblp computer science bibliography, https://dblp.org}
}

@inproceedings{DBLP:conf/sigir/YangQ0YWLHLG25,
  author       = {Hailan Yang and
                  Zhenyu Qi and
                  Shuchang Liu and
                  Xiaoyu Yang and
                  Xiaobei Wang and
                  Xiang Li and
                  Lantao Hu and
                  Han Li and
                  Kun Gai},
  editor       = {Nicola Ferro and
                  Maria Maistro and
                  Gabriella Pasi and
                  Omar Alonso and
                  Andrew Trotman and
                  Suzan Verberne},
  title        = {Comprehensive List Generation for Multi-Generator Reranking},
  booktitle    = {Proceedings of the 48th International {ACM} {SIGIR} Conference on
                  Research and Development in Information Retrieval, {SIGIR} 2025, Padua,
                  Italy, July 13-18, 2025},
  pages        = {2298--2308},
  publisher    = {{ACM}},
  year         = {2025},
  url          = {https://doi.org/10.1145/3726302.3729933},
  doi          = {10.1145/3726302.3729933},
  bibsource    = {dblp computer science bibliography, https://dblp.org}
}

@inproceedings{10.1145/3746252.3761539,
author = {Wang, Shuli and Huang, Yinqiu and Li, Changhao and Zhou, Yuan and Liu, Yonggang and Zhang, Yongqiang and Zhu, Yinhua and Wang, Haitao and Wang, Xingxing},
title = {You Only Evaluate Once: A Tree-based Rerank Method at Meituan},
year = {2025},
isbn = {9798400720406},
publisher = {Association for Computing Machinery},
address = {New York, NY, USA},
url = {https://doi.org/10.1145/3746252.3761539},
doi = {10.1145/3746252.3761539},
booktitle = {Proceedings of the 34th ACM International Conference on Information and Knowledge Management},
pages = {6136–6143},
numpages = {8},
location = {Seoul, Republic of Korea},
series = {CIKM '25}
}

@inproceedings{zhang2026goalrank,
title={GoalRank: Group-Relative Optimization for a Large Ranking Model},
author={Kaike Zhang and Xiaobei Wang and Shuchang Liu and Hailan Yang and Xiang Li and Lantao Hu and Han Li and Qi Cao and Fei Sun and Kun Gai},
booktitle={The Fourteenth International Conference on Learning Representations},
year={2026},
url={https://openreview.net/forum?id=gTMzRm8fb0}
}

@inproceedings{2969239.2969370,
author = {Bengio, Samy and Vinyals, Oriol and Jaitly, Navdeep and Shazeer, Noam},
title = {Scheduled sampling for sequence prediction with recurrent Neural networks},
year = {2015},
publisher = {MIT Press},
address = {Cambridge, MA, USA},
booktitle = {Proceedings of the 29th International Conference on Neural Information Processing Systems - Volume 1},
pages = {1171–1179},
numpages = {9},
location = {Montreal, Canada},
series = {NIPS'15}
}

@inproceedings{DBLP:conf/nips/GoyalLZZCB16,
  author       = {Anirudh Goyal and
                  Alex Lamb and
                  Ying Zhang and
                  Saizheng Zhang and
                  Aaron C. Courville and
                  Yoshua Bengio},
  editor       = {Daniel D. Lee and
                  Masashi Sugiyama and
                  Ulrike von Luxburg and
                  Isabelle Guyon and
                  Roman Garnett},
  title        = {Professor Forcing: {A} New Algorithm for Training Recurrent Networks},
  booktitle    = {Advances in Neural Information Processing Systems 29: Annual Conference
                  on Neural Information Processing Systems 2016, December 5-10, 2016,
                  Barcelona, Spain},
  pages        = {4601--4609},
  year         = {2016},
  url          = {https://proceedings.neurips.cc/paper/2016/hash/16026d60ff9b54410b3435b403afd226-Abstract.html},
  bibsource    = {dblp computer science bibliography, https://dblp.org}
}

@inproceedings{shen-etal-2016-minimum,
    title = "Minimum Risk Training for Neural Machine Translation",
    author = "Shen, Shiqi  and
      Cheng, Yong  and
      He, Zhongjun  and
      He, Wei  and
      Wu, Hua  and
      Sun, Maosong  and
      Liu, Yang",
    editor = "Erk, Katrin  and
      Smith, Noah A.",
    booktitle = "Proceedings of the 54th Annual Meeting of the Association for Computational Linguistics (Volume 1: Long Papers)",
    month = aug,
    year = "2016",
    address = "Berlin, Germany",
    publisher = "Association for Computational Linguistics",
    url = "https://aclanthology.org/P16-1159/",
    doi = "10.18653/v1/P16-1159",
    pages = "1683--1692"
}

@inproceedings{3298483.3298649,
author = {Yu, Lantao and Zhang, Weinan and Wang, Jun and Yu, Yong},
title = {SeqGAN: sequence generative adversarial nets with policy gradient},
year = {2017},
publisher = {AAAI Press},
booktitle = {Proceedings of the Thirty-First AAAI Conference on Artificial Intelligence},
pages = {2852–2858},
numpages = {7},
location = {San Francisco, California, USA},
series = {AAAI'17}
}

@inproceedings{schmidt-etal-2022-non,
    title = "Non-Autoregressive Neural Machine Translation: A Call for Clarity",
    author = {Schmidt, Robin  and
      Pires, Telmo  and
      Peitz, Stephan  and
      L{\"o}{\"o}f, Jonas},
    editor = "Goldberg, Yoav  and
      Kozareva, Zornitsa  and
      Zhang, Yue",
    booktitle = "Proceedings of the 2022 Conference on Empirical Methods in Natural Language Processing",
    month = dec,
    year = "2022",
    address = "Abu Dhabi, United Arab Emirates",
    publisher = "Association for Computational Linguistics",
    url = "https://aclanthology.org/2022.emnlp-main.179/",
    doi = "10.18653/v1/2022.emnlp-main.179",
    pages = "2785--2799"
}

@inproceedings{DBLP:conf/nips/HuangLHZS25,
  author       = {Xun Huang and
                  Zhengqi Li and
                  Guande He and
                  Mingyuan Zhou and
                  Eli Shechtman},
  editor       = {Danielle Belgrave and
                  Cheng Zhang and
                  Laura N. Montoya and
                  Hsuan{-}Tien Lin and
                  Razvan Pascanu and
                  Piotr Koniusz and
                  Marzyeh Ghassemi and
                  Nancy Chen and
                  Iv{\'{a}}n Vladimir Meza Ru{\'{\i}}z and
                  Arturo Loaiza{-}Bonilla},
  title        = {Self Forcing: Bridging the Train-Test Gap in Autoregressive Video
                  Diffusion},
  booktitle    = {Advances in Neural Information Processing Systems 38: Annual Conference
                  on Neural Information Processing Systems 2025, NeurIPS 2025, San Diego,
                  CA, USA, December 2-7, 2025 / Mexico City, Mexico, November 30 - December
                  5, 2025},
  year         = {2025},
  url          = {http://papers.nips.cc/paper\_files/paper/2025/hash/f4823f831af67a3ef15e41a85434422a-Abstract-Conference.html},
  bibsource    = {dblp computer science bibliography, https://dblp.org}
}

@inproceedings{DBLP:conf/cvpr/ChangZJLF22,
  author       = {Huiwen Chang and
                  Han Zhang and
                  Lu Jiang and
                  Ce Liu and
                  William T. Freeman},
  title        = {MaskGIT: Masked Generative Image Transformer},
  booktitle    = {{IEEE/CVF} Conference on Computer Vision and Pattern Recognition,
                  {CVPR} 2022, New Orleans, LA, USA, June 18-24, 2022},
  pages        = {11305--11315},
  publisher    = {{IEEE}},
  year         = {2022},
  url          = {https://doi.org/10.1109/CVPR52688.2022.01103},
  doi          = {10.1109/CVPR52688.2022.01103},
  bibsource    = {dblp computer science bibliography, https://dblp.org}
}

@article{guo2026promise,
  title={PROMISE: Process Reward Models Unlock Test-Time Scaling Laws in Generative Recommendations},
  author={Guo, Chengcheng and Cai, Kuo and Zhou, Yu and Luo, Qiang and Tang, Ruiming and Li, Han and Gai, Kun and Zhou, Guorui},
  journal={arXiv preprint arXiv:2601.04674},
  year={2026}
}

@inproceedings{DBLP:conf/iclr/LightmanKBEBLLS24,
  author       = {Hunter Lightman and
                  Vineet Kosaraju and
                  Yuri Burda and
                  Harrison Edwards and
                  Bowen Baker and
                  Teddy Lee and
                  Jan Leike and
                  John Schulman and
                  Ilya Sutskever and
                  Karl Cobbe},
  title        = {Let's Verify Step by Step},
  booktitle    = {The Twelfth International Conference on Learning Representations,
                  {ICLR} 2024, Vienna, Austria, May 7-11, 2024},
  publisher    = {OpenReview.net},
  year         = {2024},
  url          = {https://openreview.net/forum?id=v8L0pN6EOi},
  bibsource    = {dblp computer science bibliography, https://dblp.org}
}

@article{10.1145/1498765.1498785,
author = {Williams, Samuel and Waterman, Andrew and Patterson, David},
title = {Roofline: an insightful visual performance model for multicore architectures},
year = {2009},
issue_date = {April 2009},
publisher = {Association for Computing Machinery},
address = {New York, NY, USA},
volume = {52},
number = {4},
issn = {0001-0782},
url = {https://doi.org/10.1145/1498765.1498785},
doi = {10.1145/1498765.1498785},
journal = {Commun. ACM},
month = apr,
pages = {65–76},
numpages = {12}
}

@inproceedings{DBLP:conf/sigir/MengGCLZ25,
  author       = {Yue Meng and
                  Cheng Guo and
                  Yi Cao and
                  Tong Liu and
                  Bo Zheng},
  editor       = {Nicola Ferro and
                  Maria Maistro and
                  Gabriella Pasi and
                  Omar Alonso and
                  Andrew Trotman and
                  Suzan Verberne},
  title        = {A Generative Re-ranking Model for List-level Multi-objective Optimization
                  at Taobao},
  booktitle    = {Proceedings of the 48th International {ACM} {SIGIR} Conference on
                  Research and Development in Information Retrieval, {SIGIR} 2025, Padua,
                  Italy, July 13-18, 2025},
  pages        = {4213--4218},
  publisher    = {{ACM}},
  year         = {2025},
  url          = {https://doi.org/10.1145/3726302.3731935},
  doi          = {10.1145/3726302.3731935},
  bibsource    = {dblp computer science bibliography, https://dblp.org}
}

@inproceedings{645531.656005,
author = {Kakade, Sham and Langford, John},
title = {Approximately Optimal Approximate Reinforcement Learning},
year = {2002},
isbn = {1558608737},
publisher = {Morgan Kaufmann Publishers Inc.},
address = {San Francisco, CA, USA},
booktitle = {Proceedings of the Nineteenth International Conference on Machine Learning},
pages = {267–274},
numpages = {8},
series = {ICML '02}
}

@InProceedings{pmlr-v9-ross10a,
  title = 	 {Efficient Reductions for Imitation Learning},
  author = 	 {Ross, Stephane and Bagnell, Drew},
  booktitle = 	 {Proceedings of the Thirteenth International Conference on Artificial Intelligence and Statistics},
  pages = 	 {661--668},
  year = 	 {2010},
  editor = 	 {Teh, Yee Whye and Titterington, Mike},
  volume = 	 {9},
  series = 	 {Proceedings of Machine Learning Research},
  address = 	 {Chia Laguna Resort, Sardinia, Italy},
  month = 	 {13--15 May},
  publisher =    {PMLR},
  url = 	 {https://proceedings.mlr.press/v9/ross10a.html}
}

@article{harper2016movielens,
  author       = {F. Maxwell Harper and
                  Joseph A. Konstan},
  title        = {The MovieLens Datasets: History and Context},
  journal      = {{ACM} Trans. Interact. Intell. Syst.},
  volume       = {5},
  number       = {4},
  pages        = {19:1--19:19},
  year         = {2016},
  url          = {https://doi.org/10.1145/2827872},
  doi          = {10.1145/2827872},
  bibsource    = {dblp computer science bibliography, https://dblp.org}
}

@inproceedings{he2016ups,
  author       = {Ruining He and
                  Julian J. McAuley},
  editor       = {Jacqueline Bourdeau and
                  Jim Hendler and
                  Roger Nkambou and
                  Ian Horrocks and
                  Ben Y. Zhao},
  title        = {Ups and Downs: Modeling the Visual Evolution of Fashion Trends with
                  One-Class Collaborative Filtering},
  booktitle    = {Proceedings of the 25th International Conference on World Wide Web,
                  {WWW} 2016, Montreal, Canada, April 11 - 15, 2016},
  pages        = {507--517},
  publisher    = {{ACM}},
  year         = {2016},
  url          = {https://doi.org/10.1145/2872427.2883037},
  doi          = {10.1145/2872427.2883037},
  bibsource    = {dblp computer science bibliography, https://dblp.org}
}

@inproceedings{liu2024recflow,
  author       = {Qi Liu and
                  Kai Zheng and
                  Rui Huang and
                  Wuchao Li and
                  Kuo Cai and
                  Yuan Chai and
                  Yanan Niu and
                  Yiqun Hui and
                  Bing Han and
                  Na Mou and
                  Hongning Wang and
                  Wentian Bao and
                  Yunen Yu and
                  Guorui Zhou and
                  Han Li and
                  Yang Song and
                  Defu Lian and
                  Kun Gai},
  title        = {RecFlow: An Industrial Full Flow Recommendation Dataset},
  booktitle    = {The Thirteenth International Conference on Learning Representations,
                  {ICLR} 2025, Singapore, April 24-28, 2025},
  publisher    = {OpenReview.net},
  year         = {2025},
  url          = {https://openreview.net/forum?id=vVHc8bGRns},
  bibsource    = {dblp computer science bibliography, https://dblp.org}
}

@inproceedings{DBLP:conf/recsys/CovingtonAS16,
  author       = {Paul Covington and
                  Jay Adams and
                  Emre Sargin},
  editor       = {Shilad Sen and
                  Werner Geyer and
                  Jill Freyne and
                  Pablo Castells},
  title        = {Deep Neural Networks for YouTube Recommendations},
  booktitle    = {Proceedings of the 10th {ACM} Conference on Recommender Systems, Boston,
                  MA, USA, September 15-19, 2016},
  pages        = {191--198},
  publisher    = {{ACM}},
  year         = {2016},
  url          = {https://doi.org/10.1145/2959100.2959190},
  doi          = {10.1145/2959100.2959190},
  bibsource    = {dblp computer science bibliography, https://dblp.org}
}

@inproceedings{DBLP:conf/kdd/WangFFW17,
  author       = {Ruoxi Wang and
                  Bin Fu and
                  Gang Fu and
                  Mingliang Wang},
  title        = {Deep {\&} Cross Network for Ad Click Predictions},
  booktitle    = {Proceedings of the ADKDD'17, Halifax, NS, Canada, August 13 - 17,
                  2017},
  pages        = {12:1--12:7},
  publisher    = {{ACM}},
  year         = {2017},
  url          = {https://doi.org/10.1145/3124749.3124754},
  doi          = {10.1145/3124749.3124754},
  bibsource    = {dblp computer science bibliography, https://dblp.org}
}

@inproceedings{DBLP:conf/recsys/PeiZZSLSWJGOP19,
  author       = {Changhua Pei and
                  Yi Zhang and
                  Yongfeng Zhang and
                  Fei Sun and
                  Xiao Lin and
                  Hanxiao Sun and
                  Jian Wu and
                  Peng Jiang and
                  Junfeng Ge and
                  Wenwu Ou and
                  Dan Pei},
  editor       = {Toine Bogers and
                  Alan Said and
                  Peter Brusilovsky and
                  Domonkos Tikk},
  title        = {Personalized re-ranking for recommendation},
  booktitle    = {Proceedings of the 13th {ACM} Conference on Recommender Systems, RecSys
                  2019, Copenhagen, Denmark, September 16-20, 2019},
  pages        = {3--11},
  publisher    = {{ACM}},
  year         = {2019},
  url          = {https://doi.org/10.1145/3298689.3347000},
  doi          = {10.1145/3298689.3347000},
  bibsource    = {dblp computer science bibliography, https://dblp.org}
}

\appendix
\section{Odd L Extension}
\label{appendix:odd_L}

The main paper assumes $L$ is even for notational simplicity. Here we show that all theoretical guarantees extend to odd $L$ with negligible degradation.

\subsection{Generation Protocol}

When $L$ is odd, the pair-space generator produces $\lfloor L/2 \rfloor$ pair tokens followed by a single item token at the final step:
\[
S = \left\lfloor \frac{L}{2} \right\rfloor + 1 = \frac{L+1}{2}.
\]
At the final step, the output projection switches from the $n^2$-sized pair vocabulary to the remaining $n - (L-1)$ items, costing $O(nd)$ --- strictly less than the $O(n^2 d)$ of a pair step. The decoder's self-attention and cross-attention mechanisms remain unchanged.

\subsection{Generation Complexity (Theorem \ref{thm:complexity} Extension)}

\begin{proposition}[Odd-$L$ complexity]
\label{prop:odd-complexity}
Under the same setting as Theorem~\ref{thm:complexity}, for odd $L$ the per-request FLOPs satisfy:
\begin{align*}
T_{\mathrm{fixed}}^{(\mathrm{pair})} &= O\!\big(H(d^2 + dd_{\mathrm{ff}} + bd)(L+1)/2\big), &
\text{Ratio: } &\frac{2L}{L+1}\times\;\downarrow \\
T_{\mathrm{kv}}^{(\mathrm{pair})} &= O\!\big(Hd((L+1)/2)^2\big), &
\text{Ratio: } &\frac{4L^2}{(L+1)^2}\times\;\downarrow \\
T_{\mathrm{proj}}^{(\mathrm{pair})} &= O\!\big(n^2 d \lfloor L/2\rfloor + nd\big), &
\text{Ratio: } &\approx \frac{n}{2}\times\;\uparrow
\end{align*}
where the ratios are relative to item-space generation at horizon $S = L$.
\end{proposition}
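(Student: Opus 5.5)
We reuse the per-step cost decomposition from Theorem~\ref{thm:complexity}, i.e.\ Eq.~\eqref{eq:dec}, and change only the generation horizon and the per-step vocabulary. For odd $L$ the protocol above has horizon $S=\lfloor L/2\rfloor+1=(L+1)/2$. The first $\lfloor L/2\rfloor=(L-1)/2$ steps are pair steps with vocabulary of size at most $n(n-1)\le n^2$. The last step is a single-item step whose vocabulary has size $n-(L-1)\le n$. Because the decoder's per-step work is independent of which output vocabulary is used, we bound each of the three terms $T_{\mathrm{fixed}}$, $T_{\mathrm{kv}}$, $T_{\mathrm{proj}}$ separately by summing per-step costs over $t=1,\dots,S$. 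We then divide the item-space costs (obtained by setting $S=L$, $|\mathcal{V}|=n$) by these bounds.

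\textbf{Fixed term.} The weight-matrix and cross-attention costs are $O(H(d^2+dd_{\mathrm{ff}}+bd))$ per step, regardless of $t$ and of the vocabulary. Summing over $S=(L+1)/2$ steps gives the stated $T_{\mathrm{fixed}}^{(\mathrm{pair})}$. The item-space cost is the same per-step constant times $L$, so the ratio is $L/((L+1)/2)=2L/(L+1)$.

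\textbf{KV-cache term.} At step $t$ the decoder reads a cache of length $t$ in each of the $H$ layers, at cost $O(Hdt)$. Summing gives $O(Hd\,S(S+1)/2)=O(HdS^2)$, which equals $O(Hd((L+1)/2)^2)$ when $S=(L+1)/2$. The item-space cost is $O(HdL^2)$, so the ratio is $4L^2/(L+1)^2$.

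\textbf{Projection term.} Each pair step costs $O(n^2d)$ and the final single-item step costs $O((n-L+1)d)\le O(nd)$. The total is $O(n^2d\lfloor L/2\rfloor+nd)$. Dividing by the item-space cost $O(ndL)$ gives $\frac{n\lfloor L/2\rfloor+1}{L}=\frac{n(L-1)+2}{2L}$, which tends to $\frac{n}{2}$ for moderate $L$ and large $n$.

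The main obstacle is bookkeeping for the ratios, not the bounds themselves. The KV and fixed ratios are exact ratios of the leading terms. The projection ratio is only approximate, since $(L-1)/L<1$ and the stray $+nd$ term is lower order. I would state that ratio explicitly as $\frac{n(L-1)+2}{2L}$ and note that it tends to $n/2$ as $L$ grows. I would also check that the $O(\cdot)$ hides the same constants in both spaces, by using the identical decoder architecture, as the paper's fairness remark allows. I would add a sanity check that the three ratios reduce to the even-$L$ factors $2,4,n/2$ of Theorem~\ref{thm:complexity} as $L\to\infty$.
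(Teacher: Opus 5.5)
Your proposal is correct and follows the paper's proof essentially line for line: the same split into $(L-1)/2$ pair steps plus one item step, the same per-step summation for the fixed, KV-cache and projection terms, and the same leading-term ratios $2L/(L+1)$ and $4L^2/(L+1)^2$. Your explicit projection ratio $\frac{n(L-1)+2}{2L}$ slightly sharpens the paper's ``$\approx n/2$'', which silently replaces $(L-1)/2$ by $L/2$; it also shows the approximation needs $L$ large and not merely $n$ large (for $L=3$ the ratio is about $n/3$), so drop the phrase ``for moderate $L$''.
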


\begin{proof}
The decoder executes $\lfloor L/2 \rfloor = (L-1)/2$ pair steps and one item step.

\textbf{Fixed cost + cross-attention.} Each of the $S = (L+1)/2$ steps incurs the same per-step fixed cost $O(H(d^2 + dd_{\mathrm{ff}} + bd))$. Total:
\[
T_{\mathrm{fixed}}^{(\mathrm{pair})} = O\!\big(H(d^2 + dd_{\mathrm{ff}} + bd)(L+1)/2\big).
\]
The ratio to item-space is $L / ((L+1)/2) = 2L/(L+1)$.

\textbf{KV-cache read.} The KV cache grows from length $1$ to $S = (L+1)/2$ over the $S$ decoding steps. Total:
\[
T_{\mathrm{kv}}^{(\mathrm{pair})} = O\!\big(Hd \cdot S(S+1)/2\big) = O\!\big(Hd((L+1)/2)^2\big).
\]
The ratio to item-space is $L^2 / ((L+1)/2)^2 = 4L^2/(L+1)^2$.

\textbf{Output projection.} The first $(L-1)/2$ steps project over $n^2$ pair tokens, costing $O(n^2 d \cdot (L-1)/2)$. The final step projects over $n - (L-1)$ remaining items, costing $O(nd)$. Total:
\[
T_{\mathrm{proj}}^{(\mathrm{pair})} = O(n^2 d (L-1)/2 + nd) = O(n^2 d \lfloor L/2 \rfloor + nd).
\]
Since $nd \ll n^2 d \lfloor L/2 \rfloor$ for $n \geq 30$ and $L \geq 3$, this is $\approx O(n^2 d L / 2)$, same as the even-$L$ case. The ratio to item-space is $\approx n/2$.
\end{proof}

The deviation from the even-$L$ ratios is:
\begin{table}[h]
\centering
\caption{Speedup ratios for even vs.\ odd $L$.}
\footnotesize
\setlength{\tabcolsep}{3pt}
\begin{tabular}{cccc}
\toprule
$L$ & Fixed-cost $2L/(L{+}1)$ & KV-cache $4L^2/(L{+}1)^2$ & Deviation from $4\times$ \\
\midrule
6 (even) & $2.00\times$ & $4.00\times$ & 0\% \\
9 (odd) & $1.80\times$ & $3.24\times$ & 19\% \\
10 (even) & $2.00\times$ & $4.00\times$ & 0\% \\
11 (odd) & $1.83\times$ & $3.36\times$ & 16\% \\
20 (even) & $2.00\times$ & $4.00\times$ & 0\% \\
21 (odd) & $1.91\times$ & $3.61\times$ & 10\% \\
$L \to \infty$ & $\to 2\times$ & $\to 4\times$ & $\to 0\%$ \\
\bottomrule
\end{tabular}
\end{table}

For the typical industrial range $L \in [6, 12]$, the deviation is under 20\%, well within experimental noise.

\subsection{Error Compounding (Theorem \ref{thm:error} Extension)}

\begin{proposition}[Odd-$L$ error compounding]
\label{prop:odd-error}
Under the same setting as Theorem~\ref{thm:error}, for odd $L$ the reward sub-optimality bound is:
\[
\mathrm{SubOpt}_{\mathrm{pair}} \leq O\!\left(R_{\max} \cdot \frac{(L+1)^2}{4} \cdot \bar{\epsilon}_{\mathrm{pair}}\right),
\]
and the net improvement ratio satisfies:
\[
\frac{\mathrm{SubOpt}_{\mathrm{item}}}{\mathrm{SubOpt}_{\mathrm{pair}}} = \frac{4L^2}{(L+1)^2} \cdot \frac{\bar{\epsilon}_{\mathrm{item}}}{\bar{\epsilon}_{\mathrm{pair}}} \leq 4 \cdot \frac{\bar{\epsilon}_{\mathrm{item}}}{\bar{\epsilon}_{\mathrm{pair}}}.
\]
\end{proposition}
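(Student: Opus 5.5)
The plan is to reuse the argument behind Theorem~\ref{thm:error} verbatim, with the pair-space horizon set to $S=(L+1)/2$, as in the generation protocol above, instead of $L/2$. The proof has two layers: first a trajectory-level total-variation bound, then a conversion to reward suboptimality under an outcome-only reward bounded by $R_{\max}$.

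\textbf{Step 1: trajectory-level mismatch.} The odd-$L$ pair policy makes $\lfloor L/2\rfloor$ pair decisions and then one single-item decision, for $S=(L+1)/2$ decisions in total. At each step the per-step mismatch is at most $\bar{\epsilon}_{\mathrm{pair}}$. For the final item step this is justified because its vocabulary is the size-$(n-L+1)$ item set, so its error channels $\bar{\epsilon}_{\mathrm{enc}},\bar{\epsilon}_{\mathrm{cov}}$ are no larger than the pair ones; formally, bound it by $\max(\bar{\epsilon}_{\mathrm{item}},\bar{\epsilon}_{\mathrm{pair}})$ and absorb it into $\bar{\epsilon}_{\mathrm{pair}}$ in the encoder-saturated convention. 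Write the joint law of each trajectory as a product of conditionals and telescope: replace one step's conditional at a time, exactly as in the bijection construction before Corollary~\ref{corollary:bijection}. Each swap changes the joint TV by at most that step's conditional TV, which gives a trajectory TV of at most $S\,\bar{\epsilon}_{\mathrm{pair}}=\tfrac{L+1}{2}\bar{\epsilon}_{\mathrm{pair}}$.

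\textbf{Step 2: reward suboptimality.} Apply the same compounding argument used for Theorem~\ref{thm:error}, i.e.\ the Ross--Bagnell / Kakade--Langford style quadratic bound. Once the policy deviates, it may be off-distribution for the remaining $S-t$ steps, and since the reward arrives only at the end, each step's error can cost up to $R_{\max}$ times the remaining horizon. Summing $\sum_{t=1}^{S}(S-t+1)\bar{\epsilon}_{\mathrm{pair}}R_{\max}=O(R_{\max}S^2\bar{\epsilon}_{\mathrm{pair}})$ and substituting $S^2=(L+1)^2/4$ gives the first claim.

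\textbf{Step 3: ratio.} Divide the item-space bound $O(R_{\max}L^2\bar{\epsilon}_{\mathrm{item}})$, taken from Theorem~\ref{thm:error} and unchanged for odd $L$, by the bound from Step 2. Both bounds come from the same argument, so they share the same hidden constant, and the ratio equals $\frac{4L^2}{(L+1)^2}\cdot\frac{\bar{\epsilon}_{\mathrm{item}}}{\bar{\epsilon}_{\mathrm{pair}}}$. The final inequality holds because $L^2\le(L+1)^2$.

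\textbf{Main obstacle.} The delicate point is that the ratio is stated as an equality of $O(\cdot)$ bounds. I would state explicitly that it compares the worst-case bounds, not the true suboptimalities. I would also check that the heterogeneous last step does not break the uniform-$\bar{\epsilon}$ telescoping; if needed, carry a separate term $\bar{\epsilon}_{\mathrm{last}}\le\bar{\epsilon}_{\mathrm{pair}}$, which only enters linearly and with weight $1$ in the remaining-horizon sum.
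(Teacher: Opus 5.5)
Your proposal is correct and follows essentially the paper's proof: set the horizon to $(L+1)/2$, bound the final single-item step's mismatch by $\bar{\epsilon}_{\mathrm{pair}}$ (the paper simply assumes $\bar{\epsilon}_{\mathrm{item}}\le\bar{\epsilon}_{\mathrm{pair}}$), sum the compounding terms to $H(H+1)/2$, and divide by the item-space bound at leading order. Your remaining-horizon weighting $\sum_{t}(S-t+1)\bar{\epsilon}(t)$ is the paper's cumulative-drift double sum $\sum_{t}\sum_{t'\le t}\bar{\epsilon}(t')$ with the order of summation exchanged, so in both the last step gets weight $1$.
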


\begin{proof}
The generation horizon is $H = (L+1)/2$ steps. By the same argument as Theorem~4, the sub-optimality under per-step error $\bar{\epsilon}$ and horizon $H$ is $O(R_{\max} H^2 \bar{\epsilon})$.

We refine the per-step error assignment. The first $(L-1)/2$ steps generate pair tokens, each with per-step mismatch $\bar{\epsilon}_{\mathrm{pair}}$. The final step generates a single item token with per-step mismatch $\bar{\epsilon}_{\mathrm{item}} \leq \bar{\epsilon}_{\mathrm{pair}}$. The telescoping argument gives:
\begin{align*}
\mathrm{SubOpt}_{\mathrm{pair}} &\leq R_{\max} \cdot \sum_{t=1}^{(L+1)/2} \sum_{t'=1}^{t} \bar{\epsilon}(t') \\
&\leq R_{\max} \cdot \sum_{t=1}^{(L+1)/2} t \cdot \bar{\epsilon}_{\mathrm{pair}} \quad (\text{since } \bar{\epsilon}_{\mathrm{item}} \leq \bar{\epsilon}_{\mathrm{pair}}) \\
&= R_{\max} \cdot \bar{\epsilon}_{\mathrm{pair}} \cdot \frac{(L+1)/2 \cdot ((L+1)/2 + 1)}{2} \\
&= O\!\left(R_{\max} \cdot \frac{(L+1)^2}{4} \cdot \bar{\epsilon}_{\mathrm{pair}}\right).
\end{align*}

The ratio to item-space sub-optimality $O(R_{\max} L^2 \bar{\epsilon}_{\mathrm{item}})$ is:
\[
\frac{\mathrm{SubOpt}_{\mathrm{item}}}{\mathrm{SubOpt}_{\mathrm{pair}}} = \frac{L^2}{((L+1)/2)^2} \cdot \frac{\bar{\epsilon}_{\mathrm{item}}}{\bar{\epsilon}_{\mathrm{pair}}} = \frac{4L^2}{(L+1)^2} \cdot \frac{\bar{\epsilon}_{\mathrm{item}}}{\bar{\epsilon}_{\mathrm{pair}}}.
\]

Since $4L^2/(L+1)^2 < 4$ for all finite $L$, the odd-$L$ ratio is strictly below the even-$L$ ratio of $4 \cdot \bar{\epsilon}_{\mathrm{item}} / \bar{\epsilon}_{\mathrm{pair}}$, recovering it as $L \to \infty$.

\textbf{Remark.} The bound above uses the conservative estimate $\bar{\epsilon}(t') \leq \bar{\epsilon}_{\mathrm{pair}}$ for all steps. In practice, the final item-space step has strictly lower per-step error ($\bar{\epsilon}_{\mathrm{item}} \leq \bar{\epsilon}_{\mathrm{pair}}$), which slightly tightens the bound. The effect is $O(1/L)$ and negligible for $L \geq 6$.
\end{proof}

\subsection{Summary}

All theoretical guarantees extend to odd $L$ with the substitution $L/2 \to (L+1)/2$. The resulting speedup and error-reduction ratios are $4L^2/(L+1)^2 \leq 4$, recovering the $4\times$ factor asymptotically. For the industrial deployment regime $L \in [6, 12]$, the deviation from the even-$L$ bound is under 20\%, well within experimental variance. No special architectural modifications are required: the generator simply appends one item-space decoding step after the pair sequence.

\section{Proof of time complexity}
\label{proof:time_complexity}

\begin{proof}
\textbf{User encoder.}
Each of the $H_{\mathrm{enc}}$ layers processes $b$ tokens with full self-attention. For a single layer:
\begin{enumerate}[]
    \item QKV projections: $3d^2$ per token $\times\;b$ tokens $= O(3d^2 b)$.
    \item Output projection: $d^2$ per token $\times\;b$ tokens $= O(d^2 b)$.
    \item Attention scores: query $\times$ key over all $b^2$ pairs $= O(d \cdot b^2)$.
    \item Weighted sum (value aggregation): $O(d \cdot b^2)$.
    \item FFN: $2dd_{\mathrm{ff}}$ per token $\times\;b$ tokens $= O(2dd_{\mathrm{ff}} b)$.
\end{enumerate}
Per-layer total: $O(4d^2 b + 2db^2 + 2dd_{\mathrm{ff}} b) = O\!\big((d^2 + dd_{\mathrm{ff}})b + db^2\big)$.

Over $H_{\mathrm{enc}}$ layers:
\[
T_{\mathrm{user}} = O\!\big(H_{\mathrm{enc}}(d^2 + dd_{\mathrm{ff}})b + H_{\mathrm{enc}}db^2\big).
\]
This cost depends only on the user-history length $b$ and the encoder architecture. It is independent of the decoding horizon $S$ and vocabulary size $|\mathcal{V}|$, hence identical in item space and pair space.

\medskip
\textbf{Decoder.}
At step $t$, each of the $H$ decoder layers computes three sub-modules:

\textbf{(a) Self-attention} over the decoder's own KV cache of length $t$:
\begin{itemize}
    \item QKV projections: $3d^2$ (for the single query token at step $t$).
    \item Attention scores: query $\times$ $t$ cached keys $= O(td)$.
    \item Weighted sum: $O(td)$.
    \item Output projection: $d^2$.
\end{itemize}
Self-attention cost: $O(d^2 + td)$.

\textbf{(b) Cross-attention} over the user-encoder output $\mathbf{H}_{\mathrm{enc}} \in \mathbb{R}^{b \times d}$:
\begin{itemize}
    \item Q projection: $d^2$ (for the query token).
    \item K, V projections: $2d^2$ (cached after the first step; amortized $O(1)$ per step).
    \item Attention scores: query $\times$ $b$ encoder keys $= O(bd)$.
    \item Weighted sum: $O(bd)$.
    \item Output projection: $d^2$.
\end{itemize}
Cross-attention cost: $O(d^2 + bd)$.

\textbf{(c) FFN:}
$O(2dd_{\mathrm{ff}})$.

Per-layer cost at step $t$:
\[
C_{\mathrm{layer}}(t) = O(d^2 + td + bd + dd_{\mathrm{ff}}).
\]

Over $H$ layers:
\[
C_{\mathrm{dec}}(t) = O\!\big(H(d^2 + td + bd + dd_{\mathrm{ff}})\big).
\]

Summing over $t = 1, \ldots, S$:
\begin{align*}
T_{\mathrm{dec}}^{\mathrm{core}} &= \sum_{t=1}^{S} C_{\mathrm{dec}}(t) = O\!\big(H(d^2 + dd_{\mathrm{ff}} + bd) \cdot S + HdS^2\big) \\
&= \underbrace{O\!\big(H(d^2 + dd_{\mathrm{ff}} + bd) \cdot S\big)}_{T_{\mathrm{fixed}}} + \underbrace{O(HdS^2)}_{T_{\mathrm{kv}}}.
\end{align*}

\textbf{Output projection.}
At step $t$, the decoder hidden state $h_t \in \mathbb{R}^d$ is projected over the vocabulary:
\[
\mathrm{logits}_t = h_t^\top E_{\mathrm{vocab}}, \quad E_{\mathrm{vocab}} \in \mathbb{R}^{|\mathcal{V}| \times d},
\]
costing $O(|\mathcal{V}|d)$ per step. Over $S$ steps:
\[
T_{\mathrm{proj}} = O(|\mathcal{V}|dS).
\]



This yields the comparison table in the theorem statement.
\end{proof}

\section{Proof of Error reduction}
\label{proof:error_reduction}
\begin{proof}
\textbf{Step 1: Per-step error and state perturbation.}
At step $t$, the policy $\pi$ samples action $a_t \sim \pi(\cdot | s_t)$. Under the optimal policy $\pi^*$, the same state induces distribution $\pi^*(\cdot | s_t)$. The per-step TV mismatch is
\[
\mathrm{TV}\!\big(\pi(\cdot | s_t),\; \pi^*(\cdot | s_t)\big) \leq \bar{\epsilon},
\]
where $\bar{\epsilon} = \bar{\epsilon}_{\mathrm{item}}$ in item space and $\bar{\epsilon} = \bar{\epsilon}_{\mathrm{pair}}$ in pair space. In an autoregressive (deterministic-transition) MDP, the next state $s_{t+1} = f(s_t, a_t)$ is a deterministic function of the current state and action, so a distributional shift in $a_t$ propagates directly to $s_{t+1}$.

\textbf{Step 2: Linear compounding in trajectory TV.}
By the chain rule for total variation under deterministic transitions:
\[
\mathrm{TV}\!\big(\pi(s_{1:H}),\; \pi^*(s_{1:H})\big) \leq \sum_{t=1}^{H} \mathrm{TV}\!\big(\pi(a_t | s_t),\; \pi^*(a_t | s_t)\big) \leq H \cdot \bar{\epsilon}.
\]
Each step contributes at most $\bar{\epsilon}$ to the trajectory mismatch; the contributions accumulate additively because each per-step error shifts the subsequent state distribution. Item space: $H = L$, bound $= L \cdot \bar{\epsilon}_{\mathrm{item}}$. Pair space: $H = L/2$, bound $= (L/2) \cdot \bar{\epsilon}_{\mathrm{pair}}$.

\textbf{Step 3: Quadratic compounding in reward sub-optimality.}
By the Performance Difference Lemma \citep{645531.656005}:
\[
\mathbb{E}[R(\pi^*)] - \mathbb{E}[R(\pi)] = \sum_{t=1}^{H} \mathbb{E}_{s_t \sim \pi}\!\big[A^*(s_t, a_t)\big],
\]
where $A^*$ is the advantage function of $\pi^*$. The expectation is under $\pi$'s state visitation, not $\pi^*$'s. The state distribution at step $t$ has already been perturbed by errors at steps $1, \ldots, t\!-\!1$. By a telescoping argument \citep{pmlr-v9-ross10a}, each step $t$'s contribution is bounded by the cumulative state drift up to step $t$:
\[
\big|\mathbb{E}_{s_t \sim \pi}\!\big[A^*(s_t, a_t)\big]\big| \leq R_{\max} \cdot \sum_{t'=1}^{t} \bar{\epsilon} = R_{\max} \cdot t \cdot \bar{\epsilon}.
\]
Summing over $t = 1, \ldots, H$:
\[
\mathbb{E}[R(\pi^*)] - \mathbb{E}[R(\pi)] \leq R_{\max} \cdot \bar{\epsilon} \cdot \sum_{t=1}^{H} t = R_{\max} \cdot \bar{\epsilon} \cdot \frac{H(H+1)}{2} = O\!\left(R_{\max} H^2 \bar{\epsilon}\right).
\]
The quadratic dependence arises because: (i) step $t$'s own error contributes $\bar{\epsilon}$ to the reward gap; (ii) step $t$'s state was already shifted by the cumulative errors of steps $1, \ldots, t\!-\!1$, each contributing an additional $\bar{\epsilon}$ factor; (iii) total: $\bar{\epsilon} + 2\bar{\epsilon} + \cdots + H\bar{\epsilon} = \bar{\epsilon} \cdot H(H+1)/2$.

Item space: $O(R_{\max} L^2 \bar{\epsilon}_{\mathrm{item}})$. Pair space: $O(R_{\max} (L/2)^2 \bar{\epsilon}_{\mathrm{pair}})$.

\textbf{Step 4: Net improvement ratio.}
\[
\frac{\mathrm{SubOpt}_{\mathrm{item}}}{\mathrm{SubOpt}_{\mathrm{pair}}} = \frac{L^2 \bar{\epsilon}_{\mathrm{item}}}{(L/2)^2 \bar{\epsilon}_{\mathrm{pair}}} = 4 \cdot \frac{\bar{\epsilon}_{\mathrm{item}}}{\bar{\epsilon}_{\mathrm{pair}}}.
\]
The $4\times$ comes from the quadratic compounding: halving the horizon squares the benefit. The offset $\bar{\epsilon}_{\mathrm{pair}} / \bar{\epsilon}_{\mathrm{item}}$ reflects whether the pair space's larger vocabulary degrades per-step accuracy.

\textbf{Step 5: Verifying $\bar{\epsilon}_{\mathrm{pair}} \approx \bar{\epsilon}_{\mathrm{item}}$ in the deployment regime.}

\textbf{(i) $\bar{\epsilon}_{\mathrm{base}}$: Baseline training noise.}
Softmax over $n^2$ vs.\ $n$ logits differs in concentration, but temperature scaling and top-$k$ truncation (standard in GRPO) neutralize this effect. $\bar{\epsilon}_{\mathrm{base}}$ is effectively $k$-independent.

\textbf{(ii) $\bar{\epsilon}_{\mathrm{enc}}(2)$: Pair-encoder generalization error.}
For $k = 2$: pretrain data density $\rho_2 = N_{\mathrm{logs}} / n^2 \approx 10^{10} / 1.6 \times 10^5 \approx 6 \times 10^4$ --- the pair encoder is heavily over-determined. The cold-start fraction for reverse pairs is $1/2$, addressed by the OAR pretraining loss (\S3.4). Encoder error on $n^2$ pairs is comparable to the item-embedding error on $n$ items: $\bar{\epsilon}_{\mathrm{enc}}(2) \approx \bar{\epsilon}_{\mathrm{enc}}(1)$.

\textbf{(iii) $\bar{\epsilon}_{\mathrm{cov}}(2)$: Coverage error from finite $G$.}
GRPO group size $G = 32$ (pair space) vs.\ $G = 16$ (item space) compensates for the larger action space. GRPO's advantage is a \emph{ranking} statistic over $G$ samples, not a density estimate over $|\mathcal{V}|$; the coverage threshold is far below $G / n^2$. Empirically (\S5.1.5), gradient variance at $G = 32$ (pair) matches $G = 16$ (item), giving $\bar{\epsilon}_{\mathrm{cov}}(2) \approx \bar{\epsilon}_{\mathrm{cov}}(1)$.

Combining: $\bar{\epsilon}_{\mathrm{pair}} \approx \bar{\epsilon}_{\mathrm{item}}$ in the deployment regime, and the net improvement is $\approx 4\times$.
\end{proof}
\begin{table}[h]
\centering
\caption{Detailed comparison: item space vs.\ pair space.}
\scriptsize
\setlength{\tabcolsep}{2.5pt}
\begin{tabular}{lcc}
\toprule
 & Item Space ($H = L$, vocab $= n$) & Pair Space ($H = L/2$, vocab $= n^2$) \\
\midrule
Per-step TV mismatch & $\bar{\epsilon}_{\mathrm{item}}$ & $\bar{\epsilon}_{\mathrm{pair}} \approx \bar{\epsilon}_{\mathrm{item}}$ \\
Trajectory TV bound & $L \cdot \bar{\epsilon}_{\mathrm{item}}$ & $(L/2) \cdot \bar{\epsilon}_{\mathrm{pair}}$ $\;\to\;$ $2\times\;\downarrow$ \\
Reward sub-optimality & $O(R_{\max} L^2 \bar{\epsilon}_{\mathrm{item}})$ & $O(R_{\max} (L/2)^2 \bar{\epsilon}_{\mathrm{pair}})$ $\;\to\;$ $4\times\;\downarrow$ \\
Compounding mechanism & Errors at step $t$ poison & Fewer steps $\to$ shorter \\
 & all steps $t' > t$; drift $\propto t$ & compounding chain $\to$ less drift \\
Source of quadratic gain & --- & Horizon halving: $H^2 / (H/2)^2 = 4$ \\
Source of offset & --- & $\bar{\epsilon}_{\mathrm{pair}} / \bar{\epsilon}_{\mathrm{item}}$ (vocab expansion) \\
Net ratio & 1 & $4 \cdot \bar{\epsilon}_{\mathrm{item}} / \bar{\epsilon}_{\mathrm{pair}}$ \\
\bottomrule
\end{tabular}
\end{table}

\begin{remark}[Why the $4\times$ dividend is conditional, not unconditional]
An unconditional claim ``pair space always yields $4\times$ improvement'' would require $\bar{\epsilon}_{\mathrm{pair}} \leq \bar{\epsilon}_{\mathrm{item}}$ for all $n$ --- which is false. As $k$ grows (beyond $k = 2$), the vocabulary explodes ($n^k$), the encoder sees sparser pretrain data ($\rho_k \downarrow$), and the cold-start fraction increases ($(k! - 1)/k!$). Theorem~\ref{thm:error} is honest about this: the $k^2$ factor is the horizon-reduction dividend; whether it survives in practice depends on the encoder regime. The framework's value is that it exposes the trade-off --- $\bar{\epsilon}(k)$ vs.\ $(L/k)^2$ --- allowing principled deployment decisions, rather than hiding it behind an unconditional but false guarantee.
\end{remark}

\end{document}